
\documentclass[review,1p,times,twocolumn]{elsarticle}








\usepackage{lineno}
\modulolinenumbers[5]

\usepackage{amsfonts}
\usepackage{amsmath}
\usepackage{amssymb}
\usepackage{amsthm}
\usepackage{setspace}
\usepackage{graphicx}
\usepackage{afterpage}
\usepackage[ruled,titlenumbered,lined]{algorithm2e}
\usepackage{varioref}
\usepackage{hyperref}
\usepackage{bookmark}
\usepackage{pdflscape}
\usepackage{colortbl,color}
\usepackage{subcaption}
\usepackage{soul}
\usepackage{pgf, tikz}
\usetikzlibrary{arrows, automata}
\usepackage{caption}
\usepackage{longtable}

\usepackage{textcomp}
\usepackage{siunitx}
\usepackage{booktabs}
\usepackage{multirow}

%

\usepackage{xcolor}
\newtheoremstyle{exampstyle}
  {\topsep} 
  {\topsep} 
  {} 
  {} 
  {\bfseries} 
  {:} 
  {.5em} 
  {} 

\allowdisplaybreaks[1]

\newtheorem{theorem}{Theorem}
\newtheorem{proposition}{Proposition}
\newtheorem{corollary}{Corollary}

\makeatletter
\newcommand*{\rom}[1]{\expandafter\@slowromancap\romannumeral #1@}
\makeatother



\journal{arXiv.org}

\begin{document}

\begin{frontmatter}



\title{A cutting-plane algorithm for the Steiner team orienteering problem}


\author[dep]{Lucas Assun\c{c}\~ao\corref{cor1}}
\ead{lucas-assuncao@ufmg.br}
\author[dcc]{Geraldo Robson Mateus}
\ead{mateus@dcc.ufmg.br}
\cortext[cor1]{Corresponding author.}

\address[dep]{
  Departamento de Engenharia de Produ\c{c}\~ao, Universidade Federal de Minas Gerais\\
  Avenida Ant\^onio Carlos 6627, CEP 31270-901, Belo Horizonte, MG, Brazil\\
}

\address[dcc]{
  Departamento de Ci\^encia da Computa\c{c}\~ao, Universidade Federal de Minas Gerais\\
  Avenida Ant\^onio Carlos 6627, CEP 31270-901, Belo Horizonte, MG, Brazil\\
}

\begin{abstract}
The Team Orienteering Problem (TOP) is an NP-hard routing problem in which a fleet of identical vehicles aims at collecting rewards (prizes) available at given locations, while satisfying restrictions on the travel times. In TOP, each location can be visited by at most one vehicle, and the goal is to maximize the total sum of rewards collected by the vehicles within a given time limit. In this paper, we propose a generalization of TOP, namely the Steiner Team Orienteering Problem (STOP). In STOP, we provide, additionally, a subset of mandatory locations. In this sense, STOP also aims at maximizing the total sum of rewards collected within the time limit, but, now, every mandatory location must be visited.
In this work, we propose a new commodity-based formulation for STOP and use it within a cutting-plane scheme. The algorithm benefits from the compactness and strength of the proposed formulation and works by separating three families of valid inequalities, which consist of
some general connectivity constraints, classical lifted cover inequalities based on dual bounds and a class of conflict cuts. To our knowledge, the last class of inequalities is also introduced in this work.
A state-of-the-art branch-and-cut algorithm from the literature of TOP is adapted to STOP and used as baseline to evaluate the performance of the cutting-plane.
Extensive computational experiments show the competitiveness of the new algorithm while solving several STOP and TOP instances. In particular, it is able to solve, in total, 14 more TOP instances than any other previous exact algorithm and finds eight new optimality certificates. With respect to the new STOP instances introduced in this work, our algorithm solves 30 more instances than the baseline.
\end{abstract}

\begin{keyword}
  Vehicle routing \sep Orienteering problems \sep Cutting-plane


\end{keyword}

\end{frontmatter}

\section{Introduction}

Orienteering is a sport usually practiced in places with irregular terrains, such as mountains and dense forests.
It is given a set of \emph{control points} to be visited, each of them with an associated reward (prize).
The competitors are provided a topographical map and compass in order to guide them from an origin point to a destination one,
which are the same for all competitors.
Their goal is to maximize the total sum of rewards collected from visiting the control points within a previously established
time limit. Each reward can be collected by a single competitor, and the winner is the one that reaches the destination point within the time limit with the maximum amount
of rewards.

Based on this sport, Tsiligirides introduced the \emph{Orienteering Problem} (OP)~\cite{Tsiligirides84} . The problem is defined on a graph, usually complete and undirected, where a value of reward is associated with each vertex and a traverse time is associated with each edge (or arc).
OP aims at finding a route from an origin vertex to a destination one (visiting each vertex at most once) that
satisfies a total traverse time constraint while maximizing the sum of rewards collected. In OP, a reward cannot be multiply collected, just like in the original orienteering sport. In fact, an optimal route for OP corresponds to an optimal one for an orienteering competitor, except for the fact that, in OP, no vertex can be multiply visited. 
We also point out that, contrary to the classical
\emph{Traveling Salesman Problem} (TSP) \cite{Dantzig54}, a solution for OP does not necessarily visit all the vertices of the graph.

When the origin and the destination vertices coincide, the problem is known as the \emph{selective traveling salesman problem}
\cite{Laporte90}. Moreover, when we consider a team of competitors working together, the problem becomes the 
Team Orienteering Problem (TOP) \cite{Chao96}. In TOP, all the $m$ members of the team depart from the same vertex at the same time
and have to arrive at the destination vertex, also within a same time limit.
The goal is to find $m$ routes that, together, maximize the total reward collected by the team. As for OP, a vertex/reward cannot
be multiply visited/collected, i.e., once a member of the team collects the reward of a vertex, this vertex cannot be visited 
again.

Both OP and TOP are NP-hard \cite{Laporte90,Poggi10} and find applications in transportation and delivery of goods \cite{Viana11}.
With the advent of the \emph{e-commerce}, for instance, several virtual stores assign to different shipping companies
their delivery requests.
Nevertheless, the fleet available to a given shipping company is not always enough to perform all the deliveries assigned to it in a single working
day. In these cases, the company must select only a subset of the total amount of its deliveries.
To this end, a value of priority can be associated with each delivery. This value corresponds to the reward 
achieved by performing the delivery in the current working day and might combine different factors, such as
the importance of the client and the urgency of the request.

A similar application arises in the planning of technical visits \cite{Tang05}.
Also in this case, a reward is associated with visiting each customer and performing a given service.
Likewise, the values of the rewards rely on factors such as the priority of the service and the importance of the customer.
Therefore, the goal is to select a subset of technical visits (to be performed within a working horizon of time)
that maximizes the total sum of rewards achieved.
Notice that, in both applications, this priority policy is not enough to ensure that deliveries or technical
visits with top priority (e.g., those whose deadlines are expiring) will be necessarily selected in the planning.
In this study, we propose a variation of TOP, namely the \emph{Steiner Team Orienteering Problem} (STOP), that addresses this issue.

STOP is defined on a digraph, where an origin and a destination vertices are given, and the remaining vertices are subdivided into two categories: the mandatory ones, which must necessarily be visited, and the profitable ones, which work as Steiner vertices and, thus, may not be visited. A traverse time is associated with each arc in this digraph, and values of reward are associated with visiting the profitable vertices. In order to represent the team of members, it is also given a homogeneous fleet of vehicles, which can only run for a given time limit.
STOP aims at finding routes (one for each vehicle) from the origin vertex to the destination one such that every mandatory vertex belongs to exactly one route and the total sum of rewards collected on the visited profitable vertices is maximized. Here, each profitable vertex can be visited by at most one vehicle, thus avoiding the multiple collection of a same reward.

The main contribution of this paper consists of introducing a commodity-based compact formulation for STOP and devising a cutting-plane scheme to solve it. The cutting-plane relies on the separation of three families of valid inequalities, which consist of some general connectivity constraints, classical lifted cover inequalities based on dual bounds and a class of conflict cuts. As far as we are aware, the last class of inequalities is also introduced in this work and can be applied to similar problems in a straightforward manner.
Our algorithm highly benefits from the compactness and the strength of the formulation proposed, which we prove to give the same bounds as the one used within a state-of-the-art branch-and-cut algorithm from the literature of TOP. In this work, we adapt to STOP this branch-and-cut algorithm and use it as a baseline to evaluate the performance of the cutting-plane proposed. According to extensive experiments, our algorithm shows to be highly competitive with previous exact approaches in the literature. In fact, it is able to solve, in total, 14 more instances than any other TOP exact algorithm and finds the optimality certificates of eight previously unsolved TOP instances. With respect to the new STOP instances introduced in this work, the new algorithm solves to optimality 30 more instances than the baseline. 

The remainder of this work is organized as follows. Related works are discussed in Section~\ref{s_related_works}, and STOP is formally defined in Section~\ref{s_notation}.
In Section~\ref{s_models}, we present the two formulations used as backbone of the exact algorithms developed. In the same section, these formulations are also proven to provide the same bounds. Section~\ref{s_cuts} is devoted to describing the three families of valid inequalities used within the cutting-plane scheme proposed. Moreover, the procedures used to separate these inequalities are presented in Section~\ref{s_separation}. The baseline branch-and-cut algorithm adapted to STOP is briefly described in Section~\ref{s_branch-and-cut}, and the cutting-plane scheme proposed is detailed in Section~\ref{s_cutting-plane}. Some implementation details are given in Section~\ref{s_implementation_details}, followed by computational results (Section~\ref{s_experiments}). Concluding remarks are provided in the last section. 

\section{Related works}
\label{s_related_works}

Although STOP has not been addressed in the literature yet, a specific case of the problem that
considers a single vehicle, namely the \emph{Steiner Orienteering Problem} (SOP), was already introduced by Letchford et al.~\cite{Letchford13}.
In the work, the authors propose four Integer Linear Programming (ILP) models for the problem, but no computational
experiment is reported. STOP is also closely related to several routing problems, such as TOP, OP and the Capacitated Vehicle Routing Problem (CVRP) \cite{Toth2001} and its variations.
In the remainder of this section, we present a literature review on the main heuristic and exact algorithms to solve TOP, the problem most closely related to STOP.

The particular case of STOP with no mandatory vertices, namely TOP, was introduced by the name of the
\emph{multiple tour maximum collection problem} in the work of Butt and Cavalier~\cite{Butt94}. Nevertheless, the problem was only
formally defined by Chao et al.~\cite{Chao96}. In the latter work, the first TOP instance set was proposed, along with a heuristic procedure.

Throughout the last decade, several heuristics have been proposed for TOP. For instance, Tang and Miller-Hooks~\cite{Tang05} presented an algorithm that combines a \emph{Tabu Search} (TS) heuristic with an adaptive memory
procedure. Archetti et al.~\cite{Archetti07} developed two more TS heuristics for the problem, as well as two procedures based on
\emph{Variable Neighborhood Search} (VNS). In addition, Ke et al.~\cite{Ke08} proposed \emph{ant colony} based algorithms which presented
results comparable to those of \cite{Archetti07}, with less computational time effort.
The VNS heuristic of Vansteenwegen et al.~\cite{Vansteenwegen09} was the first procedure to focus on time efficiency. However, the quality
of the solutions obtained by it is slightly worse than that of the solutions obtained by \cite{Archetti07}.

Later on, in \cite{Souffriau10}, the same authors of \cite{Vansteenwegen09} proposed a
\emph{Greedy Randomized Adaptive Search Procedure} (GRASP) metaheuristic with \emph{Path Relinking} (PR) which was able
to outperform all the heuristic approaches aforementioned \cite{Chao96,Tang05,Archetti07,Ke08,Vansteenwegen09}
both in effectiveness (i.e., bounds of the solutions) and time efficiency.
More recently, three new approaches were able to outperform the results of \cite{Souffriau10}: the
\emph{Simulated Annealing} (SA) heuristic of Lin~\cite{Lin13}, the \emph{Large Neighborhood Search} (LNS) based heuristics of Kim et al.~\cite{Kim13}
and the evolutionary algorithm of Dang et al.~\cite{Dang13}, which is inspired by \emph{Particle Swarm Optimization} (PSO).

The algorithm of Dang et al.~\cite{Dang13} showed to be competitive with the ones of Kim et al.~\cite{Kim13} in terms of the quality of the solutions
obtained for complete digraph instances with up to 100 vertices. However, according to the results, the latter heuristics \cite{Kim13}
are more efficient.
Dang et al.~\cite{Dang13} also tested their evolutionary algorithm on larger instances, with up to 400 vertices. Due to the lack
of optimality certificates for these instances, the heuristic was only evaluated in terms of stability and time
efficiency in these cases. The results obtained by Lin~\cite{Lin13} were not compared to those of Kim et al.~\cite{Kim13} and Dang et al.~\cite{Dang13}.

To our knowledge, the latest heuristic for TOP was proposed by Ke et al.~\cite{Ke2016}.
Their heuristic, namely \emph{Pareto mimic algorithm}, introduces a so-called \emph{mimic operator} to generate new solutions
by imitating incumbent ones. The algorithm also adopts the concept of \emph{Pareto dominance} to update the population of
incumbent solutions by considering multiple indicators that measure the quality of each solution. The results indicate that
this new algorithm can achieve all the best-known bounds obtained by Lin~\cite{Lin13} and Dang et al.~\cite{Dang13}. In addition,
the algorithm of Ke et al.~\cite{Ke2016} was even able to find improved bounds for 10 of the larger instances (with up to 400 vertices) introduced by Dang et al.~\cite{Dang13}.

Although there are several heuristics to solve TOP, only a few works propose exact solution approaches for the problem.
As far as we are aware, Butt and Ryan~\cite{Butt99} presented the first exact algorithm for TOP, which is based on column generation.
More recently, Boussier et al.~\cite{Boussier07} proposed a \emph{set packing} formulation with an exponential number of variables, 
each of them representing a feasible route. In the work, the formulation is solved by means of a branch-and-price
algorithm, and the pricing sub-problems are solved through dynamic programming.
In \cite{Poggi10}, Poggi et al. proposed a branch-and-cut-and-price algorithm,
along with new min-cut and triangle clique inequalities. The algorithm solves a Dantzig-Wolfe reformulation of
a pseudo-polynomial compact formulation where edges are indexed by the time they are placed in a route.

Later on, Dang et al.~\cite{Dang13b} developed a branch-and-cut algorithm that relies on a set of dominance properties and valid
inequalities, such as symmetry breaking, generalized sub-tour eliminations and clique cuts based on graphs of
incompatibilities. The algorithms of Poggi et al.~\cite{Poggi10} and Dang et al.~\cite{Dang13b} were both able to obtain new optimality certificates.
Moreover, the branch-and-cut algorithm of Dang et al.~\cite{Dang13b} showed to be competitive with the
branch-and-price algorithm of Boussier et al.~\cite{Boussier07}. Since the authors of \cite{Poggi10} do not report the experimental results for the whole
benchmark of TOP instances in the literature, the performance of their algorithm could not be properly compared with other
approaches.

Recently, Keshtkaran et al.~\cite{Keshtkaran16} proposed a branch-and-price algorithm where the pricing sub-problems are solved be means of a dynamic
programming algorithm with decremental state space relaxation featuring a two-phase dominance rule relaxation.
The authors also presented a branch-and-cut-and-price algorithm that incorporates a family of subset-row inequalities
to the branch-and-price scheme.
The two algorithms showed to be competitive with the previous exact methods in the literature. In fact, they both were able to outperform the algorithms of Boussier et al.~\cite{Boussier07} and Dang et al.~\cite{Dang13b} in terms of the total number of instances
solved to proven optimality within the same execution time limit of two hours.
More recently, the work of Dang et al.~\cite{Dang13b} was extended by El-Hajj et al.~\cite{ElHajj2016}, where the authors attempt to solve
the same formulation proposed by the former work via a \emph{cutting-plane} algorithm. The algorithm explores intermediate models
obtained by considering only a subset of the vehicles and uses the information iteratively retrieved to solve the original
problem. Here, the promising inequalities introduced by Dang et al.~\cite{Dang13b} are also used to accelerate the convergence of
the algorithm.

Overall, the branch-and-price of Keshtkaran et al.~\cite{Keshtkaran16} and the \emph{cutting-plane} algorithm of
El-Hajj et al.~\cite{ElHajj2016} outperform the other exact algorithms previously discussed. In fact, they present a complementary behaviour
when solving the hardest instance sets, i.e., on some instances, one is better than the other and vice-versa.
As pointed out in both works, such behaviour constitutes a pattern between branch-and-cut and branch-and-price
algorithms previously presented in the literature of TOP.

A more recent work of Bianchessi et al.~\cite{Bianchessi2017} introduced a two-index compact (with a polynomial number of variables and
constraints) formulation inspired by the one of Maffioli and Sciomachen~\cite{Maffioli1997} for the \emph{sequential ordering problem},
a scheduling problem where jobs have to be processed on a single machine and are subject to time windows and precedence
relations. In \cite{Bianchessi2017}, the compact formulation for TOP is reinforced with connectivity constraints
and solved via a branch-and-cut algorithm developed with the \emph{callback} mechanism of the optimization solver
CPLEX\footnote{http://www-01.ibm.com/software/commerce/optimization/cplex-optimizer/}. This simple approach showed to be very effective in practice. In fact, 
the algorithm was able to solve at optimality 26 more instances than any other exact algorithm aforementioned
when enabling multi-threading, and 10 more instances when not. All experiments used the CPLEX built-in cuts.

For detailed surveys on exact and heuristic resolution approaches for TOP and its variants, we refer to the works of Vansteenwegen et al.~\cite{Vansteenwegen2011} and Gunawan et al.~\cite{Gunawan2016}.

\section{Problem definition and notation}
\label{s_notation}
STOP is defined on a digraph $G=(N,A)$, where $N$ is the vertex set, and $A$ is the arc set. Let $s,t \in N$ be the origin and the destination vertices, respectively, with $s \neq t$. Moreover, let $S \subseteq N\backslash\{s,t\}$ be the subset of \emph{mandatory}
vertices, and $P \subseteq N\backslash\{s,t\}$ be the set of \emph{profitable} vertices, 
such that $N = S \cup P \cup \{s,t\}$ and $S \cap P = \emptyset$. A reward $p_i \in \mathbb{Z}^+$ is associated with each vertex $i \in P$, and a traverse time $d_{ij} \in \mathbb{R}^+$ is associated with each arc $(i,j) \in A$.
Each vehicle of the homogeneous fleet $M$ can run for no more than a time limit $T$.

STOP aims at finding up to $m = |M|$ routes from $s$ to $t$ such that every mandatory vertex in $S$ belongs to exactly one route and the total sum of rewards collected by visiting profitable vertices is maximized. Here, each profitable vertex in $P$ can be visited by at most one vehicle, thus avoiding the multiple collection of a same reward.
Likewise, each mandatory vertex in $S$ must be visited only once.

In the remainder of this work, we also consider the notation described as follows.
Given a subset $V \subset N$, we define the sets of arcs leaving and entering $V$ as $\delta^+(V) = \{(i,j)\in A:\, i \in V,\, j \in N \backslash V\}$ and $\delta^-(V) = \{(i,j)\in A:\, i \in N\backslash V,\, j \in V\}$, respectively. Similarly, given a vertex $i \in N$,
we define the sets of vertices $\delta^+(i) = \{j\in N:\, (i,j) \in A\}$ and $\delta^-(i) = \{j\in N:\, (j,i)\in A\}$.
Moreover, given two arbitrary vertices $i,j \in N$ and a path $p$ from $i$ to $j$ in $G$, we define $A_p \subseteq A$ as the arc set of $p$. 

Let $R_{ij}$ denote the minimum time needed to reach a vertex $j$ when departing from a vertex $i$ in the graph $G$, i.e., $R_{ij} = \min\{\sum\limits_{a \in A_p}{d_a}:\,\mbox{$p$ is a path from $i$ to $j$ in $G$}\}$. Accordingly, $R_{ii}$ = 0 for all $i \in N$. This $R$ matrix, which is used within the definition of the mathematical formulations described next, is computed \emph{a priori} (for each instance) by means of the classical dynamic programming algorithm of Floyd-Warshall \cite{Cormen2001}. 

\section{Mathematical formulations}
\label{s_models}
In this section, we present two compact Mixed Integer Linear Programming (MILP) formulations for STOP. The first one, denoted by $\mathcal{F}_1$, directly extends the TOP formulation of Bianchessi et al.~\cite{Bianchessi2017} through the addition of constraints that impose the selection of mandatory vertices. The second one, denoted by $\mathcal{F}_2$, is a commodity-based formulation which, to the best of our knowledge, is also introduced in this work. In particular, $\mathcal{F}_1$ and $\mathcal{F}_2$ constitute, respectively, the backbone of the branch-and-cut baseline algorithm (discussed in Section~\ref{s_branch-and-cut}) and of our cutting-plane algorithm (presented in Section~\ref{s_cutting-plane}). By the end of this section, we also give a formal proof of the equivalence of these formulations and discuss how we take advantage of a specific characteristic of $\mathcal{F}_2$ in the cutting-plane algorithm we propose. Moreover, we shortly describe some of the formulations that performed poorly in pilot experiments and were, thus, discarded from this study.

Now, consider the decision variables $y$ on the choice of vertices belonging or not to the solution routes, such that $y_{i} = 1$ if the vertex $i \in N$ is visited by a vehicle
of the fleet, and $y_{i} = 0$, otherwise. Likewise, let the binary variables $x$ identify the solution routes themselves: $x_{ij} = 1$ if the arc $(i,j) \in A$ is traversed in the solution; $x_{ij} = 0$, otherwise. In addition, let the continuous variables $z_{ij}$, for all $(i,j) \in A$, represent the arrival time at vertex $j$ of a vehicle directly coming from vertex $i$. The slack variable $\varphi$ represents the number of vehicles that are not used in the solution.
$\mathcal{F}_1$ is defined from (\ref{b100}) to (\ref{b113}).
\begin{eqnarray}
  \mbox{($\mathcal{F}_1$)\quad}\max && \sum \limits_{i \in P}{p_{i}y_{i}}, \label{b100}\\
  s.t. && y_i = 1 \qquad \forall\, i \in S\cup\{s,t\}, \label{b101} \\
	&& \sum \limits_{j \in \delta^{+}(i)}{x_{ij}} = y_i \qquad \forall\, i \in S\cup P, \label{b102} \\  
	&& \sum \limits_{j \in \delta^{+}(s)}{x_{sj}} = \sum \limits_{i \in \delta^{-}(t)}{x_{it}} = m - \varphi, \label{b103} \\
	&& \sum \limits_{i \in \delta^{-}(s)}{x_{is}} = \sum \limits_{j \in \delta^{+}(t)}{x_{tj}} = 0, \label{b104} \\
	&& \sum \limits_{j \in \delta^{+}(i)}{x_{ij}} - \sum \limits_{j \in \delta^{-}(i)}{x_{ji}} = 0 \qquad \forall\, i \in S\cup P,\label{b105} \\
	&& z_{sj} = d_{sj}x_{sj} \qquad \forall\, j \in \delta^+(s), \label{b106} \\
	&& \sum \limits_{j \in \delta^{+}(i)}{z_{ij}}-\sum \limits_{j \in \delta^{-}(i)}{z_{ji}}  = \sum \limits_{j \in \delta^{+}(i)}{d_{ij}x_{ij}} \qquad \forall\, i \in S\cup P,\label{b107} \\
	&& z_{ij} \leq (T-R_{jt})x_{ij} \qquad \forall (i,j) \in A, \label{b108} \\
	&& z_{ij} \geq (R_{si} + d_{ij})x_{ij} \qquad \forall (i,j) \in A, \label{b109} \\
    && x_{ij} \in \{0,1\} \qquad \forall (i,j) \in A, \label{b110} \\
    && y_{i} \in \{0,1\} \qquad \forall i\, \in N, \label{b111} \\
    && z_{ij} \geq 0 \qquad \forall (i,j) \in A, \label{b112} \\
    && 0 \leq \varphi \leq m. \label{b113}
\end{eqnarray}

The objective function in (\ref{b100}) gives the total reward collected by visiting profitable vertices. Constraints
(\ref{b101}) impose that all mandatory vertices (as well as $s$ and $t$) are selected, while constraints (\ref{b102}) ensure that each vertex in $S \cup P$ is visited at most once. Restrictions (\ref{b103}) ensure that at most $m$ vehicles
leave the origin $s$ and arrive at the destination $t$, whereas constraints (\ref{b104}) impose that vehicles cannot arrive at
$s$ nor leave $t$. Moreover, constraints (\ref{b105}), along with constraints (\ref{b101}) and (\ref{b102}), guarantee that,
if a vehicle visits a vertex $i \in S \cup P$, then it must enter and leave this vertex exactly once.

Constraints (\ref{b106})-(\ref{b108}) ensure that each of the solution routes from $s$ to $t$ has a total traverse time of at most $T$. In particular, constraints (\ref{b106}) implicitly set the depart time from vertex $s$ to be zero, while constraints (\ref{b107}) manage the subsequent arrival times according to the vertices previously visited in each route. Constraints (\ref{b108}) impose that an arc $(i,j) \in A$ can only be traversed if the minimum extra time needed to reach $t$ from $j$ does not make infeasible the route it belongs. Restrictions (\ref{b109}) are, in fact, valid inequalities
that provide lower bounds on the arrival times represented by the $z$ variables, and
restrictions (\ref{b110})-(\ref{b113}) set the domain of the variables.
Notice that, in $\mathcal{F}_1$, the continuous variables $z$ work as flow variables, thus preventing the
existence of sub-tours in the solutions.

Formulation $\mathcal{F}_1$ was originally stated in \cite{Bianchessi2017} with the additional inequality
\begin{equation}
 \sum\limits_{(i,j)\in A}{d_{ij}x_{ij}} \leq mT, \label{ineq00}
\end{equation}

\noindent which the authors claimed to strengthen the formulation.
By the end of this section (see Corollary~\ref{corol01}), we prove that such inequality is, in fact, redundant.

The commodity-based formulation we introduce in this work, namely $\mathcal{F}_2$, also considers the $y$ and $x$ decision variables as defined above, and the intuition behind it is also similar to that of $\mathcal{F}_1$.
Precisely, in $\mathcal{F}_2$, time is treated as a commodity to be spent by the vehicles when traversing each arc in their routes, such that every vehicle departs from $s$ with an initial amount of $T$ units of commodity, the time limit.
Accordingly, the $z$ variables of $\mathcal{F}_1$ (related to the arrival times at vertices) are replaced, in $\mathcal{F}_2$, by the flow variables $f_{ij}$, for all $(i,j) \in A$, which represent the amount of time still available for a vehicle after traversing the arc $(i,j)$ as not to exceed $T$. As in $\mathcal{F}_1$, the slack variable $\varphi$ represents the number of vehicles that are not used in the solution.
$\mathcal{F}_2$ is defined as follows.
\begin{eqnarray}
  \mbox{($\mathcal{F}_2$)\quad}\max && \sum \limits_{i \in P}{p_{i}y_{i}}, \label{csc100}\\
  s.t. && \mbox{Constraints (\ref{b101})-(\ref{b105})} \nonumber\\
	&& f_{sj} = (T-d_{sj})x_{sj} \qquad \forall\, j \in \delta^+(s), \label{csc107}\\
	&& \sum \limits_{j \in \delta^{-}(i)}{f_{ji}} - \sum \limits_{j \in \delta^{+}(i)}{f_{ij}} = \sum \limits_{j \in \delta^{+}(i)}{d_{ij}x_{ij}}\qquad \forall\, i \in S\cup P,\label{csc108} \\
	&& f_{ij} \leq (T- R_{si} - d_{ij})x_{ij} \qquad \forall (i,j) \in A,\, i \neq s, \label{csc109} \\
	&& f_{ij} \geq R_{jt}x_{ij} \qquad \forall (i,j) \in A, \label{csc110} \\
    && x_{ij} \in \{0,1\} \qquad \forall (i,j) \in A, \label{csc111} \\
    && y_{i} \in \{0,1\} \qquad \forall i\, \in N, \label{csc112} \\
    && f_{ij} \geq 0 \qquad \forall (i,j) \in A, \label{csc113} \\
    && 0 \leq \varphi \leq m. \label{csc114}
\end{eqnarray}

The objective function in (\ref{csc100}) gives the total reward collected by visiting profitable vertices.
Constraints (\ref{csc107})-(\ref{csc109}) ensure that each of the solution routes has a total traverse time of at most $T$. Precisely,
restrictions (\ref{csc107}) implicitly state, along with (\ref{b103}), that the total flow available at the origin $s$ is $(m-\varphi)T$, and, in particular, each vehicle (used) has an initial amount of $T$
units of flow. Constraints (\ref{csc108}) manage the flow consumption incurred from traversing the arcs selected, whereas constraints (\ref{csc109}) impose that an arc $(i,j) \in A$ can only be traversed if the minimum time of a route from $s$ to $j$ through $(i,j)$ does not exceed $T$. In (\ref{csc109}), we do not consider the arcs leaving the origin, as they are already addressed by (\ref{csc107}).
Similarly to (\ref{b109}), the valid inequalities (\ref{csc110}) give lower bounds on the flow passing through each arc, and constraints (\ref{csc111})-(\ref{csc114}) define the domain of the variables. Here, the management of the flow associated with the variables $f$ also avoids the existence of sub-tours in the solutions.

Notice that, in both formulations, the $y$ variables can be easily discarded, as they solely aggregate specific subsets of the $x$ variables. Nevertheless, they enable us to represent some families of valid inequalities (as detailed in Section~\ref{s_cuts}) by means of less dense cuts, which can noticeably benefit the performance of cutting-plane algorithms.

Now, let $\mathcal{L}_1$ and $\mathcal{L}_2$ be the linearly relaxed versions of $\mathcal{F}_1$ and $\mathcal{F}_2$, respectively.
\begin{theorem}
\label{teo01}
$\mathcal{L}_1$ and $\mathcal{L}_2$ are equivalent.
\end{theorem}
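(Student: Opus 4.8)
The plan is to establish a bijection between feasible solutions of $\mathcal{L}_1$ and $\mathcal{L}_2$ that preserves the objective value, since both formulations share the variables $x$, $y$, $\varphi$ and the same objective function $\sum_{i\in P}p_i y_i$. The only structural difference lies in the time-tracking variables: $\mathcal{L}_1$ uses the arrival-time variables $z_{ij}$, while $\mathcal{L}_2$ uses the residual-time variables $f_{ij}$. The natural candidate map is, for each arc $(i,j)\in A$,
\begin{equation}
f_{ij} = (T - R_{jt})x_{ij} - z_{ij} + (R_{jt} + \text{(correction)})\,x_{ij},\nonumber
\end{equation}
but more cleanly I expect the right substitution to be $z_{ij} + f_{ij} = T\,x_{ij} - (\text{time needed from }j\text{ to }t)$; concretely, I would guess $f_{ij} = (T - R_{jt})x_{ij} - z_{ij}$ is not quite it, and instead try $f_{ij} = T x_{ij} - z_{ij} - R_{jt}x_{ij}$ versus simply reading off that on a traversed arc $z_{ij}$ is the time already elapsed on arrival at $j$ and $f_{ij}$ is the time still available, so their sum plus the committed future minimum should behave linearly. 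I would pin down the exact affine relation $f_{ij} = c\,x_{ij} - z_{ij}$ by checking it on the ``source'' constraints: from (\ref{b106}) $z_{sj}=d_{sj}x_{sj}$ and from (\ref{csc107}) $f_{sj}=(T-d_{sj})x_{sj}$, which forces $f_{sj} + z_{sj} = Tx_{sj}$, i.e. $c = T$ on arcs out of $s$. Testing against the flow-conservation constraints (\ref{b107}) and (\ref{csc108}) and against the bound constraints then tells us whether a single global constant $c=T$ works for all arcs; I believe it does.

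Concretely, the steps I would carry out are: (i) fix the substitution $f_{ij} := T\,x_{ij} - z_{ij}$ for all $(i,j)\in A$ (to be confirmed by the source-constraint check above); (ii) verify this map is well-defined from $\mathcal{L}_1$-feasible points to $\mathcal{L}_2$-feasible points — substitute and show (\ref{csc107}) follows from (\ref{b106}), that (\ref{csc108}) follows from (\ref{b107}) together with flow conservation (\ref{b105}) (the $T x_{ij}$ terms telescope against (\ref{b105}) applied at vertex $i$), that (\ref{csc109}) is exactly (\ref{b109}) rewritten, that (\ref{csc110}) is exactly (\ref{b108}) rewritten, and that $f_{ij}\ge 0$ follows from (\ref{b108}) which gives $z_{ij}\le (T-R_{jt})x_{ij}\le Tx_{ij}$; (iii) verify the inverse map $z_{ij} := T\,x_{ij} - f_{ij}$ sends $\mathcal{L}_2$-feasible points back to $\mathcal{L}_1$-feasible points by the symmetric check (here $z_{ij}\ge 0$ needs $f_{ij}\le Tx_{ij}$, which comes from (\ref{csc109}) since $R_{si},d_{ij}\ge 0$, handling arcs out of $s$ separately via (\ref{csc107})); (iv) note the two maps are mutually inverse and leave $x,y,\varphi$ untouched, hence preserve the objective, giving equivalence of the polyhedra (in the projected $(x,y,\varphi)$-space) and in particular equality of optimal values.

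The main obstacle I anticipate is the flow-conservation step (ii)/(iii): verifying that the right-hand sides match requires carefully tracking that $\sum_{j\in\delta^+(i)} T x_{ij} - \sum_{j\in\delta^-(i)} T x_{ji}$ vanishes, which is precisely constraint (\ref{b105}) at vertex $i$ (valid for $i\in S\cup P$, the same index range as (\ref{b107}) and (\ref{csc108})), and that the sign flips consistently — (\ref{b107}) has ``out minus in'' while (\ref{csc108}) has ``in minus out,'' which is exactly the sign reversal induced by $f = Tx - z$. A minor additional point worth stating explicitly is the treatment of the domain constraints (\ref{b108}) versus (\ref{csc110}): the inequality (\ref{b108}) $z_{ij}\le (T-R_{jt})x_{ij}$ becomes $f_{ij}\ge R_{jt}x_{ij}$ under the substitution, so the ``upper bound'' constraint in one formulation is the ``lower bound'' valid inequality in the other, and vice versa — this is the cleanest way to see why both formulations need both families. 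Finally, I would close with Corollary~\ref{corol01}'s content — that inequality (\ref{ineq00}) is redundant — which should drop out immediately: summing (\ref{csc107}) over $\delta^+(s)$ and combining with (\ref{b103}) and the chain of flow-conservation constraints (\ref{csc108}), the total committed time $\sum_{(i,j)\in A} d_{ij}x_{ij}$ equals $(m-\varphi)T - \sum_{(i,j)\in\delta^-(t)} f_{it} \le mT$ since $\varphi\ge 0$ and $f\ge 0$.
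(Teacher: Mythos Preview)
Your proposal is correct and takes essentially the same approach as the paper: the substitution $z_{ij}=Tx_{ij}-f_{ij}$ (equivalently $f_{ij}=Tx_{ij}-z_{ij}$) is exactly the relation the paper uses, and your verification plan---(\ref{b106})$\leftrightarrow$(\ref{csc107}), (\ref{b107})$\leftrightarrow$(\ref{csc108}) via the cancellation provided by (\ref{b105}), (\ref{b108})$\leftrightarrow$(\ref{csc110}), and (\ref{b109})$\leftrightarrow$(\ref{csc109}) with the arcs out of $s$ handled separately through (\ref{csc107})---mirrors the paper's proof step by step. Your explicit remark that the nonnegativity constraints $f_{ij}\ge 0$ and $z_{ij}\ge 0$ are implied by the bound constraints is a nice touch the paper leaves implicit, and your sketch of the redundancy of (\ref{ineq00}) is in the spirit of the paper's separate Proposition~\ref{prop01}.
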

\begin{proof}
 We show that, for every solution in $\mathcal{L}_1$, there is a corresponding one in ${\mathcal{L}_2}$ (and vice-versa), with a same objective function value associated. First, consider $x$ and $y$ as defined in $\mathcal{F}_1$ and $\mathcal{F}_2$, but without the integrality. Also recall that both formulations have the same objective function and that constraints (\ref{b101})-(\ref{b105}) belong to $\mathcal{F}_1$ and to $\mathcal{F}_2$. Then, we only have to show that there also exists a correspondence between the remaining linear constraints 
which define $\mathcal{L}_1$ and $\mathcal{L}_2$. To this end, let us establish the following relation between $z$ and $f$ variables:
\begin{equation}
z_{ij} = Tx_{ij} - f_{ij} \qquad \forall (i,j) \in A.\label{teo01.1}
\end{equation}

From (\ref{teo01.1}), it holds that
\begin{enumerate}
\item $(\ref{b106}) \Longleftrightarrow (\ref{csc107})$
\begin{align*}
z_{sj} = d_{sj}x_{sj} \qquad \forall\, j \in \delta^+(s)
& \;\Longleftrightarrow\; Tx_{sj} - f_{sj} = d_{sj}x_{sj} \qquad \forall\, j \in \delta^+(s)
\\ &\;\Longleftrightarrow\; f_{sj} = (T - d_{sj})x_{sj} \qquad \forall\, j \in \delta^+(s).
\end{align*}
\item $(\ref{b107}) \Longleftrightarrow (\ref{csc108})$
\begin{align*}
\sum \limits_{j \in \delta^{+}(i)}{z_{ij}}-\sum \limits_{j \in \delta^{-}(i)}{z_{ji}}  = \sum \limits_{j \in \delta^{+}(i)}{d_{ij}x_{ij}} \qquad \forall\, i \in S\cup P &\;\Longleftrightarrow \\
\sum \limits_{j \in \delta^{+}(i)}{ \Big(Tx_{ij} - f_{ij} \Big) }-\sum \limits_{j \in \delta^{-}(i)}{ \Big(Tx_{ji} - f_{ji} \Big) }  = \sum \limits_{j \in \delta^{+}(i)}{d_{ij}x_{ij}} \qquad \forall\, i \in S\cup P &\;\Longleftrightarrow \\
\sum \limits_{j \in \delta^{+}(i)}Tx_{ij} - \sum \limits_{j \in \delta^{+}(i)}f_{ij} - \sum \limits_{j \in \delta^{-}(i)}Tx_{ji} + \sum \limits_{j \in \delta^{-}(i)}f_{ji}  = \sum \limits_{j \in \delta^{+}(i)}{d_{ij}x_{ij}} \qquad \forall\, i \in S\cup P &\;\Longleftrightarrow \\
\sum \limits_{j \in \delta^{-}(i)}f_{ji} - \sum \limits_{j \in \delta^{+}(i)}f_{ij} + T\Bigg(\sum \limits_{j \in \delta^{+}(i)}x_{ij} - \sum \limits_{j \in \delta^{-}(i)}x_{ji} \Bigg) = \sum \limits_{j \in \delta^{+}(i)}{d_{ij}x_{ij}} \qquad \forall\, i \in S\cup P,&
\end{align*}

\noindent which, from (\ref{b105}), implies
\begin{align*}
\sum \limits_{j \in \delta^{-}(i)}{f_{ji}} - \sum \limits_{j \in \delta^{+}(i)}{f_{ij}} = \sum \limits_{j \in \delta^{+}(i)}{d_{ij}x_{ij}}\qquad \forall\, i \in S\cup P.
\end{align*}
\item $(\ref{b108}) \Longleftrightarrow (\ref{csc110})$
\begin{align*}
z_{ij} \leq (T-R_{jt})x_{ij} \qquad \forall (i,j) \in A
& \;\Longleftrightarrow\; Tx_{ij} - f_{ij} \leq (T-R_{jt})x_{ij} \qquad \forall (i,j) \in A
\\ &\;\Longleftrightarrow\; - f_{ij} \leq (T-R_{jt})x_{ij} - Tx_{ij} \qquad \forall (i,j) \in A
\\ &\;\Longleftrightarrow\; f_{ij} \geq R_{jt}x_{ij} \qquad \forall (i,j) \in A.
\end{align*}
\item $(\ref{b109}) \;\Longrightarrow\; (\ref{csc109})$
\begin{align*}
z_{ij} \geq (R_{si} + d_{ij})x_{ij} \qquad \forall (i,j) \in A
& \;\Longrightarrow\; Tx_{ij} - f_{ij} \geq (R_{si} + d_{ij})x_{ij} \qquad \forall (i,j) \in A
\\ &\;\Longrightarrow\; f_{ij} \leq (T - R_{si} - d_{ij})x_{ij} \qquad \forall (i,j) \in A
\\ &\;\Longrightarrow\; f_{ij} \leq (T - R_{si} - d_{ij})x_{ij} \qquad \forall (i,j) \in A,\, i \neq s.
\end{align*}

\item $(\ref{csc109}) \mbox{ and } (\ref{csc107}) \;\Longrightarrow\; (\ref{b109})$

From (\ref{csc109}) and (\ref{csc107}), we have that $f_{ij} \leq (T - R_{si} - d_{ij})x_{ij}$ for all $(i,j) \in A$, which implies
\begin{align*}
Tx_{ij} - f_{ij} \geq (R_{si} + d_{ij})x_{ij} \qquad \forall (i,j) \in A
&\;\Longrightarrow\; z_{ij} \geq (R_{si} + d_{ij})x_{ij} \qquad \forall (i,j) \in A. \qedhere
\end{align*}
\end{enumerate}
\end{proof}

\begin{proposition}
\label{prop01}
The inequality (\ref{ineq00}) does not cut off any solution from the polyhedron $\mathcal{L}_2$, the linear relaxation of $\mathcal{F}_2$.
\end{proposition}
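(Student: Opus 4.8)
The plan is to prove that every feasible point of $\mathcal{L}_2$ already satisfies (\ref{ineq00}), by aggregating the flow-conservation equations of $\mathcal{F}_2$. First I would sum the equalities (\ref{csc108}) over all internal vertices $i \in S \cup P$. On the left-hand side the flow on every arc whose two endpoints both lie in $S \cup P$ cancels (it appears once as an incoming flow and once as an outgoing flow), so the sum telescopes to the total flow leaving $s$ minus the total flow entering $t$; here I would use that $s,t \notin S\cup P$ and that, by (\ref{b104}) together with the upper bound (\ref{csc109}) and the nonnegativity (\ref{csc113}), the flow on every arc entering $s$ or leaving $t$ is forced to $0$, so that the surviving boundary terms are exactly $\sum_{j\in\delta^+(s)} f_{sj} - \sum_{i\in\delta^-(t)} f_{it}$. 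On the right-hand side the aggregation yields $\sum_{i\in S\cup P}\sum_{j\in\delta^+(i)} d_{ij}x_{ij}$, which differs from $\sum_{(i,j)\in A} d_{ij}x_{ij}$ only by the arcs leaving $s$ (the arcs leaving $t$ again contribute nothing, by (\ref{b104})).

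Next I would substitute (\ref{csc107}) into $\sum_{j\in\delta^+(s)} f_{sj}$ and invoke (\ref{b103}) to rewrite it as $T(m-\varphi) - \sum_{j\in\delta^+(s)} d_{sj}x_{sj}$. Plugging this back into the aggregated identity, the common term $\sum_{j\in\delta^+(s)} d_{sj}x_{sj}$ cancels from both sides, leaving the exact identity
\[
\sum_{(i,j)\in A} d_{ij}x_{ij} \;=\; T(m-\varphi) \;-\; \sum_{i\in\delta^-(t)} f_{it}.
\]
The proposition is then immediate: since $\varphi \ge 0$ by (\ref{csc114}) and $f_{it} \ge 0$ by (\ref{csc113}), the right-hand side is at most $Tm$, so (\ref{ineq00}) holds at every point of $\mathcal{L}_2$.

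The only delicate step is the bookkeeping in the telescoping argument: one must account correctly for the arcs incident to $s$ and $t$, which are excluded from the index set of (\ref{csc108}), and be careful not to double-count the possible arc $(s,t)$. Once those boundary terms are pinned down, the remainder is routine arithmetic, and in fact the argument produces an identity rather than just an inequality, which will also be convenient for the subsequent corollary.
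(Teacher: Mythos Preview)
Your proposal is correct and follows essentially the same approach as the paper: aggregate the flow-conservation equalities (\ref{csc108}) over $S\cup P$, let the internal arcs telescope, handle the boundary arcs via (\ref{b104}), (\ref{csc107}) and (\ref{csc109}), and then bound using (\ref{b103}) and nonnegativity. Your execution is in fact slightly tidier than the paper's, since you carry everything through to the exact identity $\sum_{(i,j)\in A} d_{ij}x_{ij} = T(m-\varphi) - \sum_{i\in\delta^-(t)} f_{it}$ before invoking $\varphi\ge 0$ and $f\ge 0$, whereas the paper drops the $-\sum f_{it}$ term earlier and works with a chain of inequalities; but the underlying argument is the same.
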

\begin{proof}
We prove this result by showing that (\ref{ineq00}) can be implied by linearly combining some of the linear constraints of $\mathcal{F}_2$.
First, by aggregating all the constraints (\ref{csc108}), we obtain
\begin{equation}
\overbrace{\sum\limits_{i \in S\cup P}\Bigg(\sum \limits_{j \in \delta^{-}(i)}{f_{ji}} - \sum \limits_{j \in \delta^{+}(i)}{f_{ij}} \Bigg)}^{(a)} =
\sum\limits_{i \in S\cup P}\sum \limits_{j \in \delta^{+}(i)}{d_{ij}x_{ij}}.\label{ineq0.1}
\end{equation}

Now, let us define the set $\bar{A}=\{(i,j) \in A:\,i,j \in S\cup P\}$ composed of the arcs whose corresponding vertices are neither $s$ nor $t$. Then, we can rewrite (a) as
\begin{align}
\notag \overbrace{\sum\limits_{i \in S\cup P}\Bigg(\sum \limits_{j \in \delta^{-}(i)}{f_{ji}} - \sum \limits_{j \in \delta^{+}(i)}{f_{ij}} \Bigg)}^{(a)}
&= \overbrace{\sum\limits_{i \in S\cup P}\sum \limits_{j \in \delta^{-}(i)}{f_{ji}}}^{(b)} -
\overbrace{\sum\limits_{i \in S\cup P} \sum \limits_{j \in \delta^{+}(i)}{f_{ij}}}^{(c)}
\\ \notag &= \overbrace{\sum\limits_{(j,i) \in \bar{A}}f_{ji} + \sum\limits_{i \in \delta^{+}(s)\cap(S\cup P)}{f_{si}} + \sum\limits_{i \in \delta^{+}(t)\cap(S\cup P)}{f_{ti}}}^{=\,(b)}
\\ & - \overbrace{\Bigg( \sum\limits_{(i,j) \in \bar{A}}f_{ij} + 
\sum\limits_{i \in \delta^{-}(s)\cap(S\cup P)}{f_{is}} + \sum\limits_{i \in \delta^{-}(t)\cap(S\cup P)}{f_{it}} \Bigg)}^{=\,(c)}.\label{ineq0.2}
\end{align}

From (\ref{b104}), (\ref{csc107}) and (\ref{csc109}), we have that $\sum\limits_{i \in \delta^{+}(t)\cap(S\cup P)}{f_{ti}}$ = $\sum\limits_{i \in \delta^{-}(s)\cap(S\cup P)}{f_{is}} = 0$. Also notice that $\sum\limits_{(j,i) \in \bar{A}}f_{ji}$ turns into $\sum\limits_{(i,j) \in \bar{A}}f_{ij}$ (and vice-versa) by simply reordering the notation. Thus, (\ref{ineq0.2}) can be rewritten as
\begin{align}
\overbrace{\sum\limits_{i \in S\cup P}\Bigg(\sum \limits_{j \in \delta^{-}(i)}{f_{ji}} - \sum \limits_{j \in \delta^{+}(i)}{f_{ij}} \Bigg)}^{(a)} = \sum\limits_{i \in \delta^{+}(s)\cap(S\cup P)}{f_{si}} - \sum\limits_{i \in \delta^{-}(t)\cap(S\cup P)}{f_{it}}. \label{ineq0.3}
\end{align}

Directly from (\ref{ineq0.1}) and (\ref{ineq0.3}), it follows that
\begin{align}
\notag \sum \limits_{i \in S\cup P}\sum \limits_{j \in \delta^{+}(i)}{d_{ij}x_{ij}}
  & = \sum\limits_{i \in \delta^{+}(s)\cap(S\cup P)}{f_{si}} - \sum\limits_{i \in \delta^{-}(t)\cap(S\cup P)}{f_{it}}
  \\ \notag & \leq \sum\limits_{i \in \delta^{+}(s)\cap(S\cup P)}{f_{si}}
  \\ & \leq \sum\limits_{i \in \delta^{+}(s)\cap(S\cup P)}{f_{si}} + \sum\limits_{i \in \delta^{+}(s)\backslash(S\cup P)}{f_{si}}
  = \sum\limits_{i \in \delta^{+}(s)}{f_{si}}. \label{ineq0.4}
\end{align}

From (\ref{csc107}) and (\ref{ineq0.4}), we have that
\begin{align}
  \notag \sum \limits_{i \in S\cup P}\sum \limits_{j \in \delta^{+}(i)}{d_{ij}x_{ij}}
  &\leq \sum \limits_{i \in \delta^{+}(s)}{f_{si}} 
  \\ \notag &= \sum \limits_{i \in \delta^{+}(s)}{(T- d_{si})x_{si}} 
  \\ &= T\Bigg(\sum\limits_{i \in \delta^{+}(s)}{x_{si}} \Bigg) - \sum\limits_{i \in \delta^{+}(s)}{d_{si}x_{si}},
\end{align}
\noindent which implies
\begin{equation}
 \overbrace{\sum \limits_{i \in S\cup P}\sum \limits_{j \in \delta^{+}(i)}{d_{ij}x_{ij}} + \sum\limits_{i \in \delta^{+}(s)}{d_{si}x_{si}}}^{(d)} \leq T\Bigg(\sum\limits_{i \in \delta^{+}(s)}{x_{si}} \Bigg). \label{ineq01}
\end{equation}

Notice that (d) corresponds to 
\begin{align}
\notag \overbrace{\sum \limits_{i \in S\cup P}\sum \limits_{j \in \delta^{+}(i)}{d_{ij}x_{ij}} + \sum\limits_{i \in \delta^{+}(s)}{d_{si}x_{si}}}^{(d)}
& = \sum \limits_{i \in S\cup P\cup\{s\}}\sum \limits_{j \in \delta^{+}(i)}{d_{ij}x_{ij}}
\\ & = \sum \limits_{i \in N \backslash \{t\}}\sum \limits_{j \in \delta^{+}(i)}{d_{ij}x_{ij}}
= \sum \limits_{(i,j) \in A}{d_{ij}x_{ij}}, \label{ineq02}
\end{align}
\noindent since, from (\ref{b104}), no arc leaving $t$ can be selected. Then, from (\ref{b103}), (\ref{csc114}), (\ref{ineq01}) and (\ref{ineq02}), it follows that 
\begin{align*}
\overbrace{\sum \limits_{(i,j) \in A}{d_{ij}x_{ij}}}^{=\,(d)}
& \leq T\Bigg(\sum\limits_{j \in \delta^{+}(s)}{x_{sj}} \Bigg)
 = (m-\varphi)T \leq mT. \qedhere
\end{align*}
\end{proof}

\begin{corollary}
\label{corol01}
The inequality (\ref{ineq00}) does not cut off any solution from the polyhedron
$\mathcal{L}_1$, the linear relaxation of $\mathcal{F}_1$.
\end{corollary}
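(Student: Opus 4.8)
The plan is to chain together Theorem~\ref{teo01} and Proposition~\ref{prop01}, exploiting the fact that the correspondence between $\mathcal{L}_1$ and $\mathcal{L}_2$ built in the proof of Theorem~\ref{teo01} leaves the $x$ (and $y$ and $\varphi$) variables untouched, since it only relates $z$ and $f$ through~(\ref{teo01.1}).

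First I would take an arbitrary point $(x,y,z,\varphi)$ feasible for $\mathcal{L}_1$. By Theorem~\ref{teo01}, there is a feasible point of $\mathcal{L}_2$ that shares the same $x$, $y$ and $\varphi$ components (the map simply replaces each $z_{ij}$ by $f_{ij} = Tx_{ij} - z_{ij}$). Then, by Proposition~\ref{prop01}, this $\mathcal{L}_2$ point satisfies inequality~(\ref{ineq00}), i.e.\ $\sum_{(i,j)\in A} d_{ij}x_{ij} \leq mT$.

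Now, since~(\ref{ineq00}) is expressed solely in terms of the $x$ variables and the constant $m$, and these are identical in the two associated points, the original $\mathcal{L}_1$ point satisfies~(\ref{ineq00}) as well. As the chosen point was arbitrary, no solution of $\mathcal{L}_1$ is cut off by~(\ref{ineq00}), which is exactly the claim.

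There is essentially no obstacle here: the only point needing care is to confirm that the bijection of Theorem~\ref{teo01} acts as the identity on the $x$-coordinates, which is immediate from~(\ref{teo01.1}) together with the fact that both formulations share constraints~(\ref{b101})--(\ref{b105}) and the same objective. One could instead give a self-contained argument mirroring the derivation in the proof of Proposition~\ref{prop01}, but now using the $z$-based constraints~(\ref{b106})--(\ref{b109}) in place of~(\ref{csc107})--(\ref{csc110}); however, routing through the already-established results is shorter and avoids repeating the telescoping computation.
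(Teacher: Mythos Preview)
Your proposal is correct and follows exactly the paper's approach, which simply reads ``Directly from Theorem~\ref{teo01} and Proposition~\ref{prop01}.'' You have merely made explicit the (immediate) observation that the correspondence of Theorem~\ref{teo01} preserves the $x$ variables so that the purely $x$-based inequality~(\ref{ineq00}) transfers from $\mathcal{L}_2$ to $\mathcal{L}_1$.
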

\begin{proof}
Directly from Theorem~\ref{teo01} and Proposition~\ref{prop01}.
\end{proof}

As already mentioned, constraints (\ref{b109}) and (\ref{csc110}) are, in fact, valid inequalities in formulations $\mathcal{F}_1$ and $\mathcal{F}_2$, respectively.
Accordingly, one can take advantage of this fact when solving these formulations by means of branch-and-cut schemes that, like CPLEX, have cut management mechanisms. Precisely, in such schemes, valid inequalities --- usually referred to as \emph{user cuts} --- are treated differently from actual model restrictions, as they are stored in \emph{pools} of cuts and only added to the models whenever they are violated.

In the case of $\mathcal{F}_2$, one can particularly benefit from cut management mechanisms.
Precisely, we experimentally observed that, on average, the impact of the valid inequalities (\ref{csc110}) on the strength of $\mathcal{F}_2$ is not as expressive as that of inequalities (\ref{b109}) on the strength of $\mathcal{F}_1$ (we refer to \ref{appendix_0} for the summary of the results). In practice, this behaviour suggests that (\ref{csc110}) are less likely to be active at optimal solutions for $\mathcal{F}_2$. Then, instead of treating inequalities (\ref{csc110}) as constraints of $\mathcal{F}_2$, we can explicitly define them as user cuts as an attempt to make the corresponding models lighter while not losing the strength of the original formulation. This simple idea was originally proposed by Fischetti et al.~\cite{Fischetti98} and is our main motivation for solving $\mathcal{F}_2$ --- and not $\mathcal{F}_1$ --- within our cutting-plane algorithm.

Originally, we also proposed and tested several other compact formulations for STOP. Precisely, we tested two-commodity and multi-commodity versions of the formulations $\mathcal{F}_1$ and $\mathcal{F}_2$, as well as variations in which the $x$ variables are indexed by the vehicles of the fleet $M$. Likewise, we also considered classical commodity-based formulations in which the consumption of each unit of commodity is linked to visiting a single vertex. In addition, we adapted to STOP the formulation for CVRP proposed by Kulkarni and Bhave~\cite{KULKARNI1985}, which uses the reinforced Miller-Tucker-Zemlin~\cite{Miller1960} subtour elimination constraints of Kara et al.~\cite{KARA2004}. Nevertheless, since all of these additional formulations performed poorly (when solved directly with CPLEX) in comparison with $\mathcal{F}_1$ and $\mathcal{F}_2$, they were omitted from this study. In fact, we conjecture that the superiority of $\mathcal{F}_1$ and $\mathcal{F}_2$ is partially due to the way they implicitly handle the limit imposed on the total traverse times of the routes.

\section{Families of valid inequalities}
\label{s_cuts}
In this section, we discuss three families of valid inequalities to be separated in the cutting-plane scheme we propose. 
They consist of some general connectivity constraints, a class of conflict cuts and classical lifted cover inequalities based on dual bounds. 

\subsection{General Connectivity Constraints (GCCs)}
The GCC inequalities were originally devised to ensure connectivity and prevent sub-tours in solution routes \cite{Toth2001}. Although these properties are already guaranteed in formulations $\mathcal{F}_1$ and $\mathcal{F}_2$, the GCCs presented below are able to further strengthen both formulations \citep{Bianchessi2017}.
\begin{equation}
\sum\limits_{(i,j) \in \delta^+(V)}{x_{ij}} \geq y_{k} \qquad \forall\; V \subseteq N\backslash\{t\},\, |V| \geq 2,\; \forall \, k \in V.
\end{equation}

\subsection{Conflict Cuts (CCs)}
Consider the set $\mathcal{K}$ of vertex pairs which cannot be simultaneously in a same valid route. Precisely, for every pair $\langle i,j\rangle \in \mathcal{K}$, with $i,j \in N\backslash\{s,t\}$, we have that any route from $s$ to $t$ that visits $i$ and $j$ (in any order) has a total traverse time that exceeds the limit $T$. Then, CCs are defined as follows.
\begin{eqnarray}
\sum\limits_{e \in \delta^-(V)}{x_{e}} \geq y_{i} + y_j \qquad \forall\; \langle i,j \rangle \in \mathcal{K},\; \forall\; V \subseteq N\backslash\{s\},\, \{i,j\} \subseteq V, \label{cc00}\\
\sum\limits_{e \in \delta^+(V)}{x_{e}} \geq y_{i} + y_j \qquad \forall\; \langle i,j \rangle \in \mathcal{K},\; \forall\; V \subseteq N\backslash\{t\},\, \{i,j\} \subseteq V.\label{cc01}
\end{eqnarray}

\begin{proposition}
\label{prop02}
Inequalities (\ref{cc00}) do not cut off any feasible solution of $\mathcal{F}_2$.
\end{proposition}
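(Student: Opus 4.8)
The plan is to show that any feasible solution $(x,y,f,\varphi)$ of $\mathcal{F}_2$ satisfies inequality (\ref{cc00}) for every conflict pair $\langle i,j\rangle\in\mathcal{K}$ and every $V\subseteq N\backslash\{s\}$ with $\{i,j\}\subseteq V$. First I would split into cases according to how many of $i$ and $j$ are actually visited in the solution, i.e. according to the value of $y_i+y_j\in\{0,1,2\}$. If $y_i+y_j\le 1$, the right-hand side is at most $1$; since $i\in V$ and, by constraints (\ref{b101})--(\ref{b102}) (which hold in $\mathcal{F}_2$), a visited vertex has exactly one outgoing arc, one shows the in-degree of $V$ is at least $y_i$ (this is essentially the flow-conservation/degree argument behind GCCs), so $\sum_{e\in\delta^-(V)}x_e\ge \max(y_i,y_j)\ge y_i+y_j$ when at most one of them is positive. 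The only genuinely substantive case is $y_i=y_j=1$, where we must prove $\sum_{e\in\delta^-(V)}x_e\ge 2$.

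For that case, the key observation is that $i$ and $j$ are both visited, and since $\langle i,j\rangle\in\mathcal{K}$ no single route can contain both of them; hence they lie on two distinct routes. I would argue via the flow variables $f$: by constraints (\ref{csc107})--(\ref{csc110}) the $f$-values, together with $x$, decompose the arc-selection into exactly $m-\varphi$ vertex-disjoint $s$--$t$ paths (this is the standard flow-decomposition fact that the paragraph after (\ref{csc113}) alludes to when it says $f$ ``avoids the existence of sub-tours''). Each such path that enters the set $V$ must contribute at least one arc to $\delta^-(V)$ (since $s\notin V$, every path starts outside $V$). Because $i$ and $j$ sit on two different paths and both are inside $V$, at least two of these paths enter $V$, giving $\sum_{e\in\delta^-(V)}x_e\ge 2=y_i+y_j$. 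Concretely I would phrase this by contradiction: if $\sum_{e\in\delta^-(V)}x_e\le 1$, then at most one route enters $V$, so the (at most one) visited vertex reachable inside $V$ along a route forces $i$ and $j$ onto the same route — contradicting $\langle i,j\rangle\in\mathcal{K}$ (the time limit $T$, enforced through (\ref{csc107})--(\ref{csc109}), is precisely what makes such a route infeasible).

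The main obstacle I anticipate is making the ``flow decomposes into vertex-disjoint $s$--$t$ paths'' claim rigorous from the specific constraints of $\mathcal{F}_2$ rather than hand-waving it: one needs that the support of $x$ on $S\cup P$ is a union of simple paths (no sub-tours, no vertex of degree $\ge 2$), which follows from the degree constraints (\ref{b102}), (\ref{b105}) together with the strictly-positive ``flow budget'' structure of (\ref{csc107})--(\ref{csc110}) — in particular (\ref{csc109}) with $R_{si}+d_{ij}>0$ prevents a cycle disjoint from $s$ from carrying any $f$, hence from being selected. A cleaner alternative, which I would actually write, avoids full decomposition: assume for contradiction $\sum_{e\in\delta^-(V)}x_e<2$; combined with $y_i=y_j=1$ and the degree equations this forces a single path through $V$ visiting both $i$ and $j$, and then one invokes the definition of $\mathcal{K}$ to contradict feasibility of the time constraints. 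I would close by noting the two easy cases and the contradiction case together establish (\ref{cc00}) for all feasible $(x,y,f,\varphi)$, completing the proof; the symmetric statement for (\ref{cc01}) would be handled analogously (or by the arc-reversal symmetry between $\delta^-$ and $\delta^+$).
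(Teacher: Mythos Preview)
Your proposal is correct and follows essentially the same approach as the paper: a case split on the values of $y_i,y_j\in\{0,1\}$, with the key case $y_i=y_j=1$ handled by observing that, since $\langle i,j\rangle\in\mathcal{K}$, the two visited vertices must lie on distinct routes, each of which contributes at least one arc of $\delta^-(V)$ because $s\notin V$. The paper's proof is more terse---it simply asserts the existence of the required incoming arcs and of the two disjoint routes without dwelling on the path-decomposition justification you worry about---but the structure and the core argument are the same as yours.
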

\begin{proof}
Consider an arbitrary feasible solution $(\bar{x},\bar{y},\bar{f},\bar{\varphi})$ for $\mathcal{F}_2$, a pair of conflicting vertices $\langle i,j \rangle \in \mathcal{K}$ and a subset $V \subseteq N \backslash \{s\}$. Then, we have four possibilities:
\begin{enumerate}
\item if ${\bar y}_i = {\bar y}_j = 0$, then $\overbrace{\sum\limits_{e \in \delta^-(V)}{x_{e}}}^{\geq\,0} \geq \overbrace{y_{i} + y_j}^{=\,0}$.
\item if ${\bar y}_i = 1$ and ${\bar y}_j = 0$, then, since $s \notin V$, there must be an arc $e' \in \delta^-(V)$, with
$\bar{x}_{e'} = 1$, so that the vertex $i$ is traversed in a route from $s$. Thus, $\overbrace{\sum\limits_{e \in \delta^-(V)}{x_{e}}}^{\geq\,1} \geq \overbrace{y_{i} + y_j}^{=\,1}$.
\item if ${\bar y}_i = 0$ and ${\bar y}_j = 1$, then, since $s \notin V$, there must be an arc $e' \in \delta^-(V)$, with
$\bar{x}_{e'} = 1$, so that the vertex $j$ is traversed in a route from $s$. Thus, $\overbrace{\sum\limits_{e \in \delta^-(V)}{x_{e}}}^{\geq\,1} \geq \overbrace{y_{i} + y_j}^{=\,1}$.
\item if ${\bar y}_i = {\bar y}_j = 1$, then, since $i$ and $j$ are conflicting, there must be at least two disjoint routes from $s$ to $t$, one that visits $i$, and other that visits $j$. Since $s \notin V$, we must also have at least two arcs
$e', e'' \in \delta^-(V)$, such that $x_{e'} = x_{e''} = 1$. Therefore, $\overbrace{\sum\limits_{e \in \delta^-(V)}{x_{e}}}^{\geq\,2} \geq \overbrace{y_{i} + y_j}^{=\,2}$.
\end{enumerate}

\end{proof}
\begin{corollary}
\label{corol02}
Inequalities (\ref{cc01}) do not cut off any feasible solution of $\mathcal{F}_2$.
\end{corollary}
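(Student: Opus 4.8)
The plan is to mirror the case analysis in the proof of Proposition~\ref{prop02}, exchanging the roles of $\delta^-$ and $\delta^+$ and of the origin $s$ and the destination $t$. I would fix an arbitrary feasible solution $(\bar x,\bar y,\bar f,\bar\varphi)$ of $\mathcal{F}_2$, a conflicting pair $\langle i,j\rangle\in\mathcal{K}$ with $i,j\in N\backslash\{s,t\}$, and a subset $V\subseteq N\backslash\{t\}$ with $\{i,j\}\subseteq V$, and then distinguish the four cases determined by the values of $\bar y_i$ and $\bar y_j$. Recall that constraints~(\ref{b101})--(\ref{b105}), together with the flow conditions on $\bar f$, force the support of $\bar x$ to be a union of $m-\bar\varphi$ interior-disjoint $s$--$t$ paths, so the notion of ``the route serving a visited vertex'' is well defined.

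If $\bar y_i=\bar y_j=0$, the right-hand side of (\ref{cc01}) is $0$ and the inequality holds trivially since $\bar x\geq 0$. If exactly one of the two vertices is visited, say $\bar y_i=1$ and $\bar y_j=0$, then $i$ lies on a route that starts at $s$ and ends at $t$; because $i\in V$ while $t\notin V$, that route must traverse at least one arc of $\delta^+(V)$, so $\sum_{e\in\delta^+(V)}\bar x_e\geq 1=\bar y_i+\bar y_j$, and the subcase $\bar y_j=1$ is symmetric. Finally, if $\bar y_i=\bar y_j=1$, the definition of $\mathcal{K}$ combined with the fact that, by (\ref{b102}), each of $i$ and $j$ is visited by at most one vehicle forces $i$ and $j$ to be served by two distinct routes that are disjoint except at $s$ and $t$; each of these routes leaves $V$ on its way to $t\notin V$, and their disjointness (together with the binary and degree constraints, exactly as argued for $\delta^-(V)$ in Proposition~\ref{prop02}) yields two distinct arcs of $\delta^+(V)$, whence $\sum_{e\in\delta^+(V)}\bar x_e\geq 2=\bar y_i+\bar y_j$.

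The only point requiring a little care is the same one underlying Proposition~\ref{prop02}: justifying that a vertex lying in $V$ and served by a route forces at least one used arc to cross $\delta^+(V)$ --- which is immediate since every used route terminates at $t\notin V$ --- and, in the last case, justifying that the two routes through $i$ and through $j$ cannot share an interior vertex, since otherwise $i$ and $j$ would lie on a common $s$--$t$ route, contradicting $\langle i,j\rangle\in\mathcal{K}$. I do not expect any genuine obstacle here, as this is the verbatim ``dual'' of the argument already carried out. One could alternatively try to deduce the statement formally from Proposition~\ref{prop02} by applying it to the instance obtained by reversing every arc of $G$ and swapping $s$ with $t$, under which $\delta^+(V)$ becomes $\delta^-(V)$ and the set $\mathcal{K}$ is unchanged; however, checking that $\mathcal{F}_2$ is preserved by this transformation (the matrix $R$ and constraints~(\ref{csc107})--(\ref{csc110}) get consistently relabelled) takes some bookkeeping, so the direct case analysis above is the cleaner route.
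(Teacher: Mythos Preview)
Your proposal is correct and takes essentially the same approach as the paper. The paper's proof is just the one-line remark that the argument of Proposition~\ref{prop02} goes through after replacing $s$ with $t$ and $\delta^-(V)$ with $\delta^+(V)$; you have merely written out the resulting four cases (and the route-disjointness justification) in more detail than the paper bothers to.
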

\begin{proof}
The same mathematical argumentation of Proposition~\ref{prop02} can be used to prove the validity of inequalities (\ref{cc01}) by simply replacing $s$ and $\delta^-(V)$ with $t$ and $\delta^+(V)$, respectively.  
\end{proof}

\begin{corollary}
\label{corol02.2}
Inequalities (\ref{cc00}) and (\ref{cc01}) do not cut off any feasible solution of $\mathcal{F}_1$.
\end{corollary}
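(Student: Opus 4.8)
The plan is to leverage the equivalence established in Theorem~\ref{teo01} together with Proposition~\ref{prop02} and Corollary~\ref{corol02}, exploiting the fact that inequalities~(\ref{cc00}) and~(\ref{cc01}) are expressed solely in terms of the $x$ and $y$ variables, which are common to both $\mathcal{F}_1$ and $\mathcal{F}_2$.

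First I would take an arbitrary feasible solution $(\bar{x},\bar{y},\bar{z},\bar{\varphi})$ of $\mathcal{F}_1$. Since $\mathcal{F}_1$ differs from its linear relaxation $\mathcal{L}_1$ only by the integrality requirements on $x$ and $y$, this point is in particular a feasible solution of $\mathcal{L}_1$ in which $\bar{x}$ and $\bar{y}$ are integral. Applying the change of variables $f_{ij} = T\bar{x}_{ij} - \bar{z}_{ij}$ from~(\ref{teo01.1}), the argument in the proof of Theorem~\ref{teo01} produces a feasible solution $(\bar{x},\bar{y},\bar{f},\bar{\varphi})$ of $\mathcal{L}_2$ with exactly the same $x$ and $y$ components (only the flow/arrival-time variables are transformed). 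Because $\bar{x}$ and $\bar{y}$ are unchanged, and hence still integral, this point is in fact a feasible solution of $\mathcal{F}_2$.

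Next I would invoke Proposition~\ref{prop02} and Corollary~\ref{corol02}: the constructed point $(\bar{x},\bar{y},\bar{f},\bar{\varphi})$ satisfies every inequality of the form~(\ref{cc00}) and~(\ref{cc01}). Since each such inequality involves only the $x$ and $y$ variables, and these coincide with $\bar{x}$ and $\bar{y}$, the original solution $(\bar{x},\bar{y},\bar{z},\bar{\varphi})$ of $\mathcal{F}_1$ satisfies the same inequalities. As the feasible solution of $\mathcal{F}_1$ was arbitrary, inequalities~(\ref{cc00}) and~(\ref{cc01}) cut off no feasible solution of $\mathcal{F}_1$, which is precisely the claim.

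There is essentially no obstacle here; the only point that deserves a moment of care is observing that the variable transformation used in Theorem~\ref{teo01} leaves $x$ and $y$ untouched and therefore preserves their integrality, so that a feasible solution of $\mathcal{F}_1$ is carried to a genuine integer-feasible solution of $\mathcal{F}_2$ rather than merely to a point of its linear relaxation. Once this is noted, the validity statements for $\mathcal{F}_2$ transport back along the correspondence and the result is immediate.
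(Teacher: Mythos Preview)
Your proposal is correct and follows essentially the same approach as the paper, which simply states that the result follows directly from Theorem~\ref{teo01}, Proposition~\ref{prop02} and Corollary~\ref{corol02}. Your write-up merely makes explicit the one detail implicit in that one-line proof, namely that the transformation~(\ref{teo01.1}) leaves the $x$ and $y$ components untouched so that integer feasibility carries over from $\mathcal{F}_1$ to $\mathcal{F}_2$.
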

\begin{proof}
Directly from Theorem~\ref{teo01}, Proposition~\ref{prop02} and Corollary~\ref{corol02}.
\end{proof}

Notice that, from (\ref{b105}), we have that $\sum\limits_{e \in \delta^-(V)}{x_{e}} = \sum\limits_{e \in \delta^+(V)}{x_{e}}$ for all $V \subseteq S \cup P = N\backslash\{s,t\}$. Therefore, for all $V \subseteq N\backslash\{s,t\}$, inequalities (\ref{cc00}) and (\ref{cc01}) cut off the exact same regions of the polyhedrons $\mathcal{L}_1$ and $\mathcal{L}_2$. In this sense, the whole set of CCs can be represented in a more compact manner
by (\ref{cc00}) and 
\begin{eqnarray}
\sum\limits_{e \in \delta^+(V)}{x_{e}} \geq y_{i} + y_j \qquad \forall\; \langle i,j \rangle \in \mathcal{K},\; \forall\; V \subseteq N \backslash \{t\},\, \{s,i,j\} \subseteq V.\label{cc02}
\end{eqnarray}

\noindent Notice that inequalities (\ref{cc02}) only consider the subsets $V \subseteq N \backslash \{t\}$ which necessarily contain $s$.

Intuitively speaking, CCs forbid the simultaneous selection (in a same route) of any pair of conflicting vertices of $\mathcal{K}$. In particular, CCs (\ref{cc00}) work similarly to the classical \emph{capacity-cut constraints} of Toth and Vigo~\cite{Toth2001}, but in a more flexible manner. Precisely, given a pair of conflicting vertices $\langle i,j \rangle \in \mathcal{K}$ and a subset $V \subseteq N\backslash\{s\},\, \{i,j\} \subseteq V$, the corresponding CC of type (\ref{cc00}) states that the minimum number of vehicles needed to visit $i$ and $j$ is exactly $y_i + y_j$ (which assumes, at most, value 2). Alternatively, $y_i + y_j$ can be seen as a lower bound on the number of vehicles needed to visit all the vertices in $V$.

For instance, consider the digraph shown in Figure~\ref{fig:ex}, and let the traverse times of its arcs be $d_{si} = d_{sl} = d_{ik} = d_{lj} = d_{lt} = d_{jt} = 1$ and $d_{kj} = 2$. Also consider $S = \emptyset$ and a single vehicle to move from $s$ to $t$, with $T= 4$. In this case, $i$ and $j$ are an example of conflicting vertices, since the only possible route from $s$ to $t$ visiting both vertices exceeds the time limit 4. Figures~\ref{fig:cc00} and~\ref{fig:cc02} show typical fractional solutions that are cut off by CCs $(\ref{cc00})$ and $(\ref{cc02})$, respectively.
Notice that the solution of Figure~\ref{fig:cc00a} also violates CCs $(\ref{cc01})$.
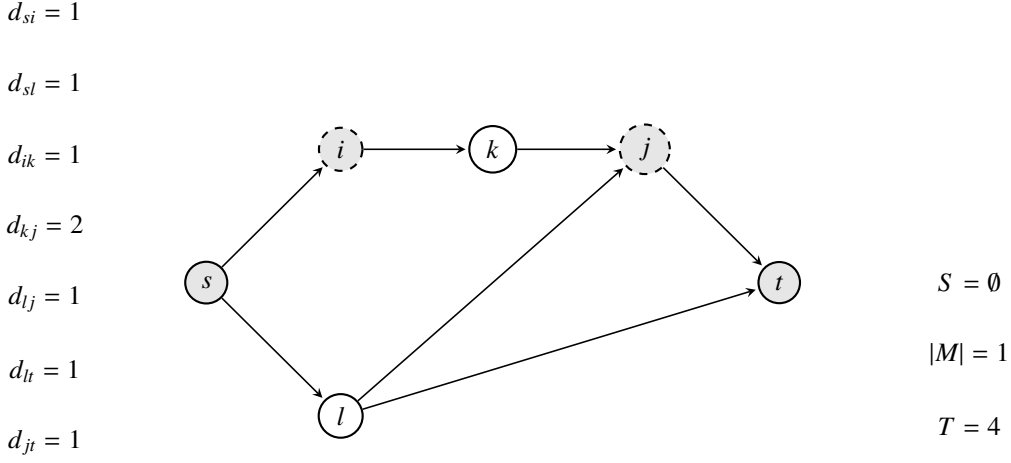
\begin{figure}
\centering
  \begin{tikzpicture}[
            > = stealth, 
            shorten > = 1pt, 
            auto,
            node distance = 2.5cm, 
            semithick 
        ]

        \tikzstyle{every state}=[
            draw = black,
            thick,
            fill = white,
            minimum size = 4mm
        ]

        \node[state] (s) [fill = gray!20]{$s$};
        \node[state] (i) [above right of=s,fill = gray!20,dashed] {$i$};
        \node[state] (l) [below right of=s] {$l$};
        \node[state] (k) [right of=i,node distance = 2cm] {$k$};
        \node[state] (j) [right of=k,node distance = 2cm,fill = gray!20,dashed] {$j$};
        \node[state] (t) [below right of=j,fill = gray!20] {$t$};
        
        \node[state] (jt) [below left of = s, draw = none, node distance = 3cm]{$d_{jt}= 1$};
        \node[state] (lt) [above of = jt, draw = none, node distance = 0.95cm]{$d_{lt}= 1$};
        \node[state] (lj) [above of = lt, draw = none, node distance = 0.95cm]{$d_{lj}= 1$};
        \node[state] (kj) [above of = lj, draw = none, node distance = 0.95cm]{$d_{kj}= 2$};
        \node[state] (ik) [above of = kj, draw = none, node distance = 0.95cm]{$d_{ik}= 1$};
        \node[state] (sl) [above of = ik, draw = none, node distance = 0.95cm]{$d_{sl}= 1$};
        \node[state] (si) [above of = sl, draw = none, node distance = 0.95cm]{$d_{si}= 1$};
        
        \node[state] (S) [right of = t, draw = none, node distance = 2.5cm]{$S = \emptyset$};
        \node[state] (M) [below of = S, draw = none, node distance = 0.95cm]{$|M| = 1$};
         \node[state] (T) [below of = M, draw = none, node distance = 0.95cm]{$T = 4$};
    
        \path[->] (s) edge node {} (i);
        \path[->] (s) edge node {} (l);
        \path[->] (l) edge node {} (j);
        \path[->] (i) edge node {} (k);
        \path[->] (k) edge node {} (j);
        \path[->] (j) edge node {} (t);
        \path[->] (l) edge node {} (t);
    
    \end{tikzpicture}
\caption{An example of an STOP instance. Profit values are omitted. Here, the pair $\langle i,j\rangle$ gives an example of conflicting vertices.}
\label{fig:ex}
\end{figure}

\begin{figure}
\centering
\begin{subfigure}{.5\textwidth}
  \centering
  \begin{tikzpicture}[
            > = stealth, 
            shorten > = 1pt, 
            auto,
            node distance = 2.5cm, 
            semithick 
        ]

        \tikzstyle{every state}=[
            draw = black,
            thick,
            fill = white,
            minimum size = 4mm
        ]

        \node[state] (s) [fill = gray!20]{$s$};
        \node[state] (i) [above right of=s,dashed,fill = gray!20] {$i$};
        \node[state] (l) [below right of=s] {$l$};
        \node[state] (k) [right of=i,node distance = 2cm] {$k$};
        \node[state] (j) [right of=k,node distance = 2cm,dashed,fill = gray!20] {$j$};
        \node[state] (t) [below right of=j,fill = gray!20] {$t$};
        
        \node[state] (V') [below left of = s, draw = none, node distance = 1cm]{$N\backslash V$};
        \node[state] (V) [above of = V', draw = none, node distance = 2cm]{$V$};

        \path[->] (s) edge node {0.3} (i);
        \path[->] (s) edge node {0.7} (l);
        \path[->] (l) edge node[below] {0.7} (t);
        \path[->] (i) edge node {0.3} (k);
        \path[->] (k) edge node {0.3} (j);
        \path[->] (j) edge node {0.3} (t);
        \path[->] (l) edge node {0} (j);
        
        \draw[red, dashed,thick] (-1.2, 0.55) -- (8, 0.8);
    \end{tikzpicture}
    \caption{Representation of a fractional solution that is cut off from $\mathcal{L}_2$ by (\ref{cc00}). Here, we have $x_{si} = x_{ik} = x_{kj} = x_{jt} = 0.3$, $x_{sl} = x_{lt} = 0.7$ and $x_{lj} = 0$. Accordingly, $y_s = y_t = 1$, $y_i = y_k = y_j = 0.3$ and $y_l = 0.7$. The violated inequality has $V = \{i,k,j\}$.}
  \label{fig:cc00a}
\end{subfigure}
\\%
\begin{subfigure}{.5\textwidth}
  \centering
  \begin{tikzpicture}[
            > = stealth, 
            shorten > = 1pt, 
            auto,
            node distance = 2.5cm, 
            semithick 
        ]

        \tikzstyle{every state}=[
            draw = black,
            thick,
            fill = white,
            minimum size = 4mm
        ]

        \node[state] (s) [fill = gray!20]{$s$};
        \node[state] (i) [above right of=s,fill = gray!20,dashed] {$i$};
        \node[state] (l) [below right of=s] {$l$};
        \node[state] (k) [right of=i,node distance = 2cm] {$k$};
        \node[state] (V) [above of = i, draw = none, node distance = 1cm]{$V$};
        \node[state] (V') [left of = V, draw = none, node distance = 1.7cm]{$N\backslash V$};
        \node[state] (j) [right of=k,node distance = 2cm,fill = gray!20,dashed] {$j$};
        \node[state] (t) [below right of=j,fill = gray!20] {$t$};
    
        \path[->] (s) edge node {0.3} (i);
        \path[->] (s) edge node {0.7} (l);
        \path[->] (l) edge node {0.7} (j);
        \path[->] (i) edge node {0.3} (k);
        \path[->] (k) edge node {0.3} (j);
        \path[->] (j) edge node {1} (t);
        \path[->] (l) edge node[below] {0} (t);

        \draw[red, dashed, thick] (0.9, 3.2) -- (0.9,-2);
    \end{tikzpicture}
  \caption{Representation of a fractional solution that is cut off from $\mathcal{L}_2$ by (\ref{cc00}). Here, we have $x_{si} = x_{ik} = x_{kj} = 0.3$, $x_{sl} = x_{lj} = 0.7$, $x_{lt} = 0$ and $x_{jt} = 1$. Accordingly, $y_s = y_t = y_j = 1$, $y_i = y_k = 0.3$ and $y_l = 0.7$. The violated inequality has $V = \{i,k,j,l,t\}$.}
  \label{fig:cc00b}
\end{subfigure}
\caption{Examples of fractional solutions that are cut off by CCs $(\ref{cc00})$ when considering the polyhedron
$\mathcal{L}_2$ and the STOP instance of Figure~\ref{fig:ex}.}
\label{fig:cc00}
\end{figure}
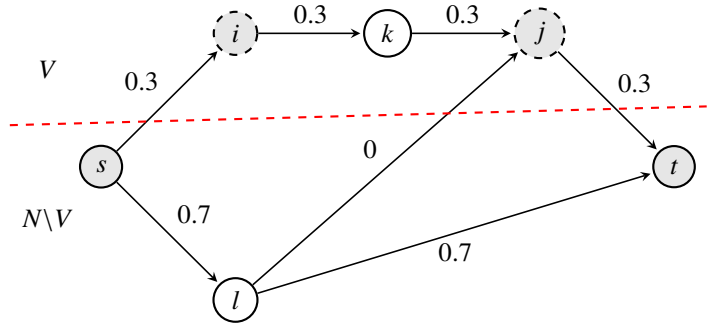
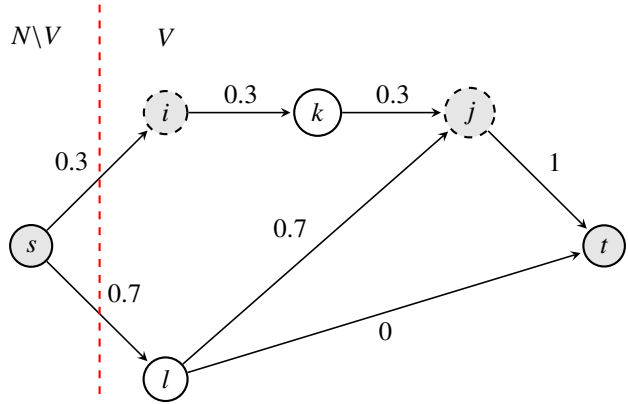

\begin{figure}
\centering
  \begin{tikzpicture}[
            > = stealth, 
            shorten > = 1pt, 
            auto,
            node distance = 2.5cm, 
            semithick 
        ]

        \tikzstyle{every state}=[
            draw = black,
            thick,
            fill = white,
            minimum size = 4mm
        ]

        \node[state] (s) [fill = gray!20]{$s$};
        \node[state] (i) [above right of=s,fill = gray!20,dashed] {$i$};
        \node[state] (l) [below right of=s] {$l$};
        \node[state] (k) [right of=i,node distance = 2cm] {$k$};
        \node[state] (j) [right of=k,node distance = 2cm,fill = gray!20,dashed] {$j$};
        \node[state] (t) [below right of=j,fill = gray!20] {$t$};
        
        \node[state] (V) [below left of = l, draw = none, node distance = 1cm]{$V$};
        \node[state] (V') [right of = V, draw = none, node distance = 1.8cm]{$N\backslash V$};
    
        \path[->] (s) edge node {0.5} (i);
        \path[->] (s) edge node {0.5} (l);
        \path[->] (l) edge node {0.5} (j);
        \path[->] (i) edge node {0.5} (k);
        \path[->] (k) edge node {0.5} (j);
        \path[->] (j) edge node {1} (t);
        \path[->] (l) edge node[below] {0} (t);

        \draw[red, dashed, thick] (1.0, -3) -- (7.3,2);
    \end{tikzpicture}
\caption{Example of a fractional solution that is cut off by CCs $(\ref{cc02})$ when considering the polyhedron
$\mathcal{L}_2$ and the STOP instance of Figure~\ref{fig:ex}. Here, we have $x_{si} = x_{ik} = x_{kj} = x_{sl} = x_{lj} = 0.5$, $x_{lt} = 0$ and $x_{jt} = 1$. Accordingly, $y_s = y_t = y_j = 1$ and $y_i = y_k = y_l = 0.5$. The violated inequality has $V = \{s,l,i,k,j\}$.}
\label{fig:cc02}
\end{figure}
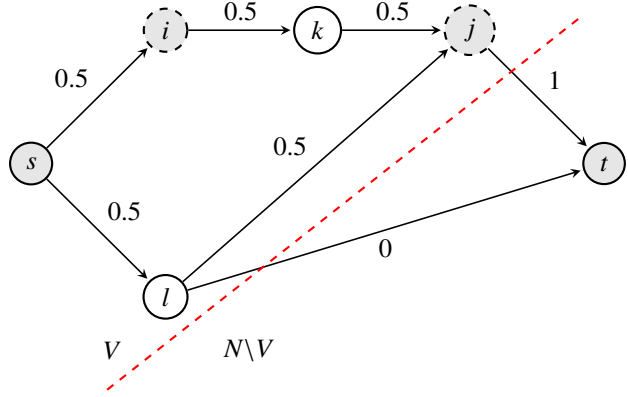

\subsection{Lifted Cover Inequalities (LCIs)}
Let $\tau$ be a dual (upper) bound on the optimal solution value of $\mathcal{F}_2$. Then, consider the inequality
\begin{equation}
\sum \limits_{i \in P}{p_{i}y_{i}} \leq \lfloor\tau\rfloor \label{lci1},
\end{equation}

\noindent where $P \subseteq N \backslash \{s,t\}$ is the set of profitable vertices, with $p_i \in \mathbb{Z}^+$ for all $i \in P$, as defined in Section~\ref{s_notation}. By definition, (\ref{lci1}) is valid for $\mathcal{F}_2$ ( and $\mathcal{F}_1$), once its left-hand side corresponds to the objective function of this formulation.

Notice that (\ref{lci1}) consists of a \emph{knapsack} constraint, and, thus, it
can be strengthened by means of classical cover inequalities and \emph{lifting}. In this sense, a set $C \subseteq P$ is called a \emph{cover} for inequality (\ref{lci1}) if $\sum \limits_{i \in C}{p_{i}} > \lfloor\tau\rfloor$. Moreover, this cover is said to be \emph{minimal} if it no longer covers (\ref{lci1}) once any of its elements is removed, i.e., $\sum \limits_{i \in C\backslash\{j\}}{p_{i}} \leq \lfloor\tau\rfloor$ for all $j \in C$.

Consider the set $\Phi$ of the $y$ solution values that satisfy (\ref{lci1}). Precisely,
\begin{equation}
\Phi = \Big\{y \in \{0,1\}^{|P|}:\, \sum \limits_{i \in P}{p_{i}y_{i}} \leq \lfloor\tau\rfloor\Big\}. \label{lci1_2}
\end{equation}
\noindent For any cover $C \subseteq P$, the inequality 
\begin{equation}
\sum \limits_{i \in C}{y_{i}} \leq |C| -1 \label{lci2}
\end{equation}

\noindent is called a \emph{cover inequality} and is valid for $\Phi$. In this work, we apply a variation of these inequalities, namely LCIs, which are facet-inducing for $\Phi$ and can be devised from (\ref{lci2}) through lifting (see, e.g., \cite{Balas1975,Wolsey1975,Gu98,Kaparis08}). Let two disjoint sets $C_1$ and $C_2$ define a partition of a given minimal cover $C$, with $C_1 \neq \emptyset$. LCIs are defined as
\begin{equation}
\sum \limits_{i \in C_1}{y_{i}} + \sum\limits_{j \in C_2}{\pi_jy_j} + \sum\limits_{j \in P\backslash C}{\mu_jy_j} \leq |C_1| + \sum\limits_{j \in C_2}{\pi_j} -1, \label{lci3}
\end{equation}

\noindent where $\pi_i \in \mathbb{Z}$, $\pi_i \geq 1$ for all $i\in C_2$, and $\mu_i \in \mathbb{Z}$, $\mu_i \geq 0$ for all $ i\in P \backslash C$, are the lifted coefficients. We detail the way these coefficients are computed in the separation procedure of Section~\ref{s_sep_lcis}. For now, only notice that setting $\pi = \textbf{1}$ and $\mu = \textbf{0}$ suffices for the validity of (\ref{lci3}), as it leads to the classical cover inequality (\ref{lci2}). 

\section{Separation of the valid inequalities}
\label{s_separation}
Let $(\bar{x},\bar{y},\bar{f},\bar{\varphi})$ (or $(\bar{x},\bar{y},\bar{z},\bar{\varphi})$) be a given fractional solution referring to the linear relaxation of $\mathcal{F}_2$ (or $\mathcal{F}_1$). Also consider the residual graph $\tilde{G} = (N,\tilde{A})$ induced by $(\bar{x},\bar{y},\bar{f},\bar{\varphi})$, such that each arc $(i,j) \in A$ belongs to $\tilde{A}$ if, and only if, $\bar{x}_{ij} > 0$. Moreover, a capacity $c[i,j] = \bar{x}_{ij}$ is associated with each arc $(i,j) \in \tilde{A}$.

As detailed in the sequel, the separation of GCCs and CCs involves solving maximum flow problems. Then, for clarity, consider the following notation.
Given an arbitrary digraph ${G}_a$ with capacitated arcs, and two vertices $i$ and $j$ of ${G}_a$, let ${\textit{max-flow}}_{i\to j}{({G}_a)}$ denote the problem of finding the maximum flow (and, thus, a minimum cut) from $i$ to $j$ on ${G}_a$.
Moreover, let $\langle F_{{i\to j}},\theta_{i\to j} \rangle$ denote an optimal solution of such problem, where $F_{{i\to j}}$ is the value of the maximum flow, and $\theta_{i\to j}$ defines a corresponding minimum cut, with $i \in \theta_{i\to j}$. 

\subsection{GCCs}
\label{s_separate_gccs}
The separation of violated GCCs is done by means of the same algorithm adopted by Bianchessi et al.~\cite{Bianchessi2017} and described as follows.
For each pair of vertices $\langle v,t \rangle$, $v \in N\backslash\{t\}$, a
maximum flow from $v$ to $t$ on $\tilde{G}$ is computed, i.e., the problem
${\textit{max-flow}}_{v\to t}{(\tilde{G})}$ is solved, obtaining a solution $\langle F_{v\to t},\theta_{v\to t} \rangle$.
Moreover, consider $v^* = \arg\max\limits_{j \in \theta_{v\to t}}\{{\bar y}_j\}$.
If $t \notin \theta_{v\to t}$, $|\theta_{v\to t}| \geq 2$ and ${\bar y}_{v^*}$ is greater than the value of the maximum flow $F_{v\to t}$, then a violated GCC is found, which corresponds to 
\begin{equation}
\sum\limits_{(i,j) \in \delta^+(\theta_{v\to t})}{x_{ij}} \geq y_{v^*}.
\end{equation}

\noindent Notice that, in this case, the GCCs found (if any) are the most violated ones, one for each pair $\langle v,t \rangle$, $v \in N\backslash\{t\}$.

\subsection{CCs}
First, we determine the set $\mathcal{K}$ of conflicting vertices. To this end, let $\mathcal{M}$ be a $|N| \times |N|$ matrix such that, for each pair of vertices $i, j \in N$,
$\mathcal{M}[i,j]$ denotes the traverse time of a shortest (minimum time) path from $i$ to $j$ on ${G}$. If no path exists from a vertex to another, the corresponding entry of $\mathcal{M}$
is set to infinity. One may observe that $\mathcal{M}$ is not necessarily symmetric, since ${G}$ is directed. Moreover, considering that the traverse times associated with the arcs of ${G}$ are non-negative (and, thus, no negative cycle exists), one can determine $\mathcal{M}$ by means of the classical dynamic programming algorithm of Floyd-Warshall \cite{Cormen2001}, for instance.

Then, $\mathcal{K}$ can be computed by checking, for all pairs $\langle i,j\rangle$, $i,j \in N$, $i \neq j$, if there exists a path from  $s$ to $t$ on ${G}$ that traverses both $i$ and $j$ (in any order) and that satisfies the total time limit $T$. If no such path exists, then $\langle i,j\rangle$ belongs to $\mathcal{K}$. For simplicity, in this work, we only consider a subset $\tilde{\mathcal{K}} \subseteq \mathcal{K}$ of conflicting vertex pairs, such that
\begin{equation*}
\langle i,j\rangle\in \tilde{\mathcal{K}} \text{ iff }
\begin{cases}
        \quad (i)\; \mathcal{M}[s,i] + \mathcal{M}[i,j] + \mathcal{M}[j,t] > T,\; \text{and} \\
        \quad (ii)\; \mathcal{M}[s,j] + \mathcal{M}[j,i] + \mathcal{M}[i,t] > T, 
\end{cases}
\qquad{ \forall\, i,j \in N, i \neq j.}
\end{equation*}

where $(i)$ is satisfied if a minimum traverse time route from $s$ to $t$ that visits $i$ before $j$ exceeds the time limit. Likewise, $(ii)$ considers a minimum time route that visits $j$ before $i$.
Since the routes from $s$ to $t$ considered in $(i)$ and $(ii)$ are composed by simply aggregating entries of $\mathcal{M}$, they may not be elementary, i.e., they might visit a same vertex more than once. Then,
$\tilde{\mathcal{K}}$ is not necessarily equal to ${\mathcal{K}}$. Also observe that we only have to compute $\tilde{\mathcal{K}}$ a single time for a given STOP instance, as it is completely based on the original graph $G$.

Once $\tilde{\mathcal{K}}$ is computed, we look for violated CCs of types (\ref{cc00}) and (\ref{cc01}) separately, as described in the algorithm of Figure~\ref{cc_separation}. Let set $\mathcal{X}$ keep the CCs found during the separation procedure. Initially, $\mathcal{X}$ is empty (line 1, Figure~\ref{cc_separation}). Then, for all pairs of conflicting vertices $\langle i,j\rangle \in \tilde{\mathcal{K}}$, we build two auxiliary graphs, one for each type of CC. The first graph, denoted by
$\tilde{G}_1$, is built by adding to the residual graph $\tilde{G}$ an artificial vertex $\alpha$ and two arcs: $(i,\alpha)$ and $(j,\alpha)$ (see line 4, Figure~\ref{cc_separation}). The second one, denoted by $\tilde{G}_2$, is built by reversing all the arcs of $\tilde{G}$ and, then, adding an artificial vertex $\beta$, as well as the arcs $(i,\beta)$ and $(j,\beta)$ (see line 5, Figure~\ref{cc_separation}).

The capacities of the arcs of
$\tilde{G}_1$ and $\tilde{G}_2$ are kept in the data structures $c_1$ and $c_2$, respectively. In both graphs,
the capacities of the original arcs in $\tilde{G}$ are preserved (see lines 6-8, Figure~\ref{cc_separation}).
Moreover, all the additional arcs have a same capacity value, which is equal to a sufficiently large number. Here, we adopted the value of $|M|$, the number of vehicles (see line 9, Figure~\ref{cc_separation}).

Figure~\ref{fig:ex_ccs} illustrates the construction of the auxiliary graphs described above. In this example, we consider the STOP instance of Figure~\ref{fig:ex} and assume that the current fractional solution $(\bar{x},\bar{y},\bar{f},\bar{\varphi})$ has ${\bar x}_{si} = {\bar x}_{ik} = {\bar x}_{kj} = {\bar x}_{sl} = {\bar x}_{lj} = 0.5$, ${\bar x}_{lt} = 0$ and ${\bar x}_{jt} = 1$.
\begin{figure}[!ht]
\begin{center}
\scalebox{1}
{
\framebox
{
\begin{minipage}[t]{25cm}
{\small
\begin{tabbing}
xxx\=xxx\=xxx\=xxx\=xxx\=xxx\=xxx\=xxx\=xxx\=xxx\= \kill
\textbf{Input: }A fractional solution $(\bar{x},\bar{y},\bar{f},\bar{\varphi})$, its corresponding residual graph $\tilde{G} = (N,\tilde{A})$ and \\ the subset $\tilde{\mathcal{K}}$ of conflicting vertex pairs. \\
\textbf{Output: }{A set $\mathcal{X}$ of CCs violated by $(\bar{x},\bar{y},\bar{f},\bar{\varphi})$.} \\
\textbf{\scriptsize1.} $\mathcal{X} \leftarrow \emptyset$;\\
\textbf{\scriptsize2.} \textbf{for all }{$\langle i,j \rangle \in \tilde{\mathcal{K}}$} \textbf{ do}\\
\textbf{\scriptsize3.} \> \> \textit{Step \rom{1}. Building the auxiliary graphs}\\
\textbf{\scriptsize4.} \> \>  Build $\tilde{G}_1 = (\tilde{N}_1,\tilde{A}_1)$, with $\tilde{N}_1 = N \cup \{\alpha\}$ and $\tilde{A}_1 = \tilde{A} \cup \{(i,\alpha),(j,\alpha)\}$;\\
\textbf{\scriptsize5.} \> \>  Build $\tilde{G}_2 = (\tilde{N}_2,\tilde{A}_2)$, with $\tilde{N}_2 = N \cup \{\beta\}$ and $\tilde{A}_2 = \{(v,u):\, (u,v) \in \tilde{A}\} \cup \{(i,\beta),(j,\beta)\}$;\\
\textbf{\scriptsize6.} \> \> \textbf{for all }{$(u,v) \in \tilde{A}$} \textbf{ do}\\
\textbf{\scriptsize7.} \> \> \> \> $c_1[u,v] \leftarrow c_2[v,u] \leftarrow c[u,v]$;\\
\textbf{\scriptsize8.} \> \> {\bf end-for}; \\
\textbf{\scriptsize9.} \> \> $c_1[i,\alpha] \leftarrow c_1[j,\alpha] \leftarrow c_2[i,\beta] \leftarrow c_2[j,\beta] \leftarrow |M|$;\\
\textbf{\scriptsize10.} \> \> \textit{Step \rom{2}. Looking for a violated CC (\ref{cc00})} \\
\textbf{\scriptsize11.} \> \> $\langle F_{s\to \alpha},\theta_{s\to \alpha} \rangle \leftarrow {\textit{max-flow}}_{s\to \alpha}{(\tilde{G}_1)}$; \\
\textbf{\scriptsize12.} \> \> \textbf{if }{$F_{s\to \alpha} < {\bar y}_i + {\bar y}_j$} \textbf{ then} \\
\textbf{\scriptsize13.} \> \> \> \>  $\mathcal{X} \leftarrow \mathcal{X} \cup \{ \langle N \backslash \theta_{s\to \alpha},\langle i,j \rangle \rangle\}$; \\
\textbf{\scriptsize14.} \> \> {\bf end-if}; \\
\textbf{\scriptsize15.} \> \> \textit{Step \rom{3}. Looking for a violated CC (\ref{cc01})} \\
\textbf{\scriptsize16.} \> \> $\langle F_{t\to \beta},\theta_{t\to \beta} \rangle \leftarrow {\textit{max-flow}}_{t\to \beta}{(\tilde{G}_2)}$; \\
\textbf{\scriptsize17.} \> \> \textbf{if }{$F_{t\to \beta} < {\bar y}_i + {\bar y}_j$} \textbf{ then} \\
\textbf{\scriptsize18.} \> \> \> \>  $\mathcal{X} \leftarrow \mathcal{X} \cup \{ \langle N\backslash \theta_{t\to \beta},\langle i,j \rangle \rangle\}$; \\
\textbf{\scriptsize19.} \> \> {\bf end-if}; \\
\textbf{\scriptsize20.} {\bf end-for}; \\
\textbf{\scriptsize21.} \textbf{return} $\mathcal{X}$;
\end{tabbing}
 }
\end{minipage}
}
}
\end{center}
\caption{Algorithm used to separate violated CCs.}
\label{cc_separation}
\end{figure}

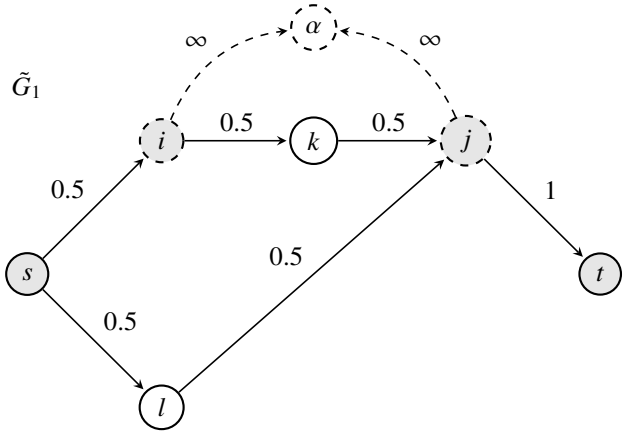
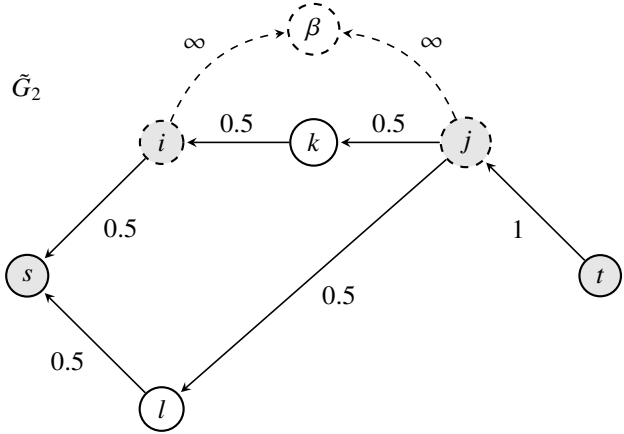
\begin{figure}
\centering

\begin{subfigure}{.5\textwidth}
  \centering
    \begin{tikzpicture}[
            > = stealth, 
            shorten > = 1pt, 
            auto,
            node distance = 2.5cm, 
            semithick 
        ]

        \tikzstyle{every state}=[
            draw = black,
            thick,
            fill = white,
            minimum size = 4mm
        ]

        \node[state] (s) [fill = gray!20]{$s$};
        \node[state] (i) [above right of=s,fill = gray!20,dashed] {$i$};
        \node[state] (l) [below right of=s] {$l$};
        \node[state] (k) [right of=i,node distance = 2cm] {$k$};
        \node[state] (j) [right of=k,node distance = 2cm,fill = gray!20,dashed] {$j$};
        \node[state] (t) [below right of=j,fill = gray!20] {$t$};
        \node[state] (alpha) [dashed,above of=k,,node distance = 1.5cm]{$\alpha$};
        
        \node[state] (G) [above of = s, draw = none, node distance = 2.5cm]{$\tilde{G}_1$};
    
        \path[->] (s) edge node {0.5} (i);
        \path[->] (s) edge node {0.5} (l);
        \path[->] (l) edge node {0.5} (j);
        \path[->] (i) edge node {0.5} (k);
        \path[->] (k) edge node {0.5} (j);
        \path[->] (j) edge node {1} (t);
        
        \path[->] (i) edge[dashed,bend left] node {$\infty$} (alpha);
        \path[->] (j) edge[dashed,bend right] node[above right] {$\infty$} (alpha);

    \end{tikzpicture}
    \caption{Example of an auxiliary graph $\tilde{G}_1$ used in the separation of CCs (\ref{cc00}).}
  \label{ex_ccs_a}
\end{subfigure}
\\
\begin{subfigure}{.5\textwidth}
  \centering
    \begin{tikzpicture}[
            > = stealth, 
            shorten > = 1pt, 
            auto,
            node distance = 2.5cm, 
            semithick 
        ]

        \tikzstyle{every state}=[
            draw = black,
            thick,
            fill = white,
            minimum size = 4mm
        ]

        \node[state] (s) [fill = gray!20]{$s$};
        \node[state] (i) [above right of=s,fill = gray!20,dashed] {$i$};
        \node[state] (l) [below right of=s] {$l$};
        \node[state] (k) [right of=i,node distance = 2cm] {$k$};
        \node[state] (j) [right of=k,node distance = 2cm,fill = gray!20,dashed] {$j$};
        \node[state] (t) [below right of=j,fill = gray!20] {$t$};
        \node[state] (beta) [dashed,above of=k,,node distance = 1.5cm]{$\beta$};
        
        \node[state] (G) [above of = s, draw = none, node distance = 2.5cm]{$\tilde{G}_2$};
    
        \path[->] (i) edge node {0.5} (s);
        \path[->] (l) edge node {0.5} (s);
        \path[->] (j) edge node {0.5} (l);
        \path[->] (k) edge node[above] {0.5} (i);
        \path[->] (j) edge node[above] {0.5} (k);
        \path[->] (t) edge node {1} (j);
        
        \path[->] (i) edge[dashed,bend left] node {$\infty$} (beta);
        \path[->] (j) edge[dashed,bend right] node[above right] {$\infty$} (beta);

    \end{tikzpicture}
  \caption{Example of an auxiliary graph $\tilde{G}_2$ used in the separation of CCs (\ref{cc01}).}
  \label{fig:ex_ccs_b}
\end{subfigure}
\caption{Auxiliary graphs built when considering the STOP instance of Figure~\ref{fig:ex}, the pair of conflicting vertices $\langle i,j \rangle$
and a fractional solution $\bar{x}$, with ${\bar x}_{si} = {\bar x}_{ik} = {\bar x}_{kj} = {\bar x}_{sl} = {\bar x}_{lj} = 0.5$, ${\bar x}_{lt} = 0$ and ${\bar x}_{jt} = 1$. Here, the values associated with the arcs are their corresponding capacities, and the infinity symbol stands for a sufficiently large value.}
\label{fig:ex_ccs}
\end{figure}

Once the auxiliary graphs are built for a given $\langle i,j \rangle \in \tilde{\mathcal{K}}$, the algorithm looks for violated CCs by solving two maximum flow problems: one from $s$ to $\alpha$ on $\tilde{G}_1$ and other from $t$ to $\beta$ on $\tilde{G}_2$. Let $\langle F_{s\to \alpha},\theta_{s\to \alpha} \rangle$ be the solution of the first maximum flow problem, i.e., $\langle F_{s\to \alpha},\theta_{s\to \alpha} \rangle = {\textit{max-flow}}_{s\to \alpha}{(\tilde{G}_1)}$.
Recall that $F_{s\to \alpha}$ gives the value of the resulting maximum flow, and $\theta_{s\to \alpha}$ defines a corresponding minimum cut, with $s \in \theta_{s \to \alpha}$. Then, the algorithm checks if $F_{s\to \alpha}$ is smaller than ${\bar y}_i + {\bar y}_j$. If that is the case, a violated CC
(\ref{cc00}) is identified and added to $\mathcal{X}$ (see lines 10-14, Figure~\ref{cc_separation}). Precisely, this inequality is denoted by
$\langle N\backslash \theta_{s\to \alpha},\langle i,j \rangle \rangle$ and defined as
\begin{equation}
\sum\limits_{e \in \delta^-(N \backslash \theta_{s \to \alpha})}{x_{e}} \geq y_{i} + y_j,
\end{equation}

\noindent where $N \backslash \theta_{s \to \alpha}$ corresponds to the subset $V \subseteq N\backslash \{s\}$ of (\ref{cc00}).

Likewise, let $\langle F_{t\to \beta},\theta_{t\to \beta} \rangle$ be the solution of the second maximum flow problem, i.e., $\langle F_{t\to \beta},\theta_{t\to \beta} \rangle = {\textit{max-flow}}_{t\to \beta}{(\tilde{G}_2)}$. If $F_{t\to \beta}$ is smaller than ${\bar y}_i + {\bar y}_j$, then a violated CC (\ref{cc01}) is identified and added to $\mathcal{X}$ (see lines 15-19, Figure~\ref{cc_separation}). This CC is denoted by $\langle N\backslash \theta_{t\to \beta},\langle i,j \rangle \rangle$ and defined as
\begin{equation}
\sum\limits_{e \in \delta^+(N \backslash \theta_{t \to \beta})}{x_{e}} \geq y_{i} + y_j.
\end{equation}

\noindent Here, $N \backslash \theta_{t \to \beta}$ corresponds to the subset $V \subseteq N\backslash \{t\}$ of (\ref{cc01}).

\subsection{LCIs}
\label{s_sep_lcis}
Here, we describe the separation procedure we adopt to obtain LCIs, which is based on the algorithmic framework of Gu et al.~\cite{Gu98}. Since the peculiarities behind lifting might be tricky, we first elucidate in details the concepts of \emph{down-lifting} and \emph{up-lifting}, which we apply throughout this section.

Consider a (not necessarily minimal) cover $C \subseteq P$ for the knapsack constraint (\ref{lci1}) and the corresponding cover inequality (\ref{lci2}). Moreover, let $\omega \in \mathbb{Z}^{|P|}$ be the coefficient vector of the $y$ variables in (\ref{lci2}), such that $\omega_i = 1$ for all $i \in C$, and $\omega_i = 0$ for all $i \in P\backslash C$. Accordingly, (\ref{lci2}) can be alternatively stated as
\begin{equation}
\sum\limits_{i \in P}{\omega_i y_i} \leq |C| -1. \label{lci2_2}
\end{equation}

The attempt to increase the coefficient values of the variables $y_i$ in (\ref{lci2_2}) such that $i \in C$ (and, thus, $\omega_i = 1$) is called down-lifting.
Likewise, the process of computing new coefficients for the variables $y_i$
such that $i \notin C$ (and, thus, $\omega_i = 0$) is called up-lifting.
In both cases, the aim is to strengthen the original cover inequality by replacing the initial coefficients with possibly greater positive integer values. Naturally, these lifted coefficients must be computed in a way that preserves the validity of the cover inequality with respect to the original knapsack constraint (\ref{lci1}).

In this work, both liftings are done sequentially (i.e., one variable at a time), and the lifted coefficients are computed by solving auxiliary knapsack problems to optimality.
In order to describe how down-lifting works, consider a partition of the cover $C$ into two disjoint sets $C_1$ and $C_2$, with $C_1 \neq \emptyset$. In this partition, $C_1$ keeps the indexes of the variables whose coefficients will not be updated (i.e., will remain equal to one), while $C_2$ identifies the variables to be down-lifted. Moreover, let $C'$ keep the indexes of the variables whose lifted coefficients were already computed, such that, initially, $C' = \emptyset$ and, by the end of all down-liftings, $C' = C_2$. During the down-lifting process, consider the following auxiliary inequality
\begin{equation}
\sum \limits_{i \in C_1}{y_{i}} + \sum \limits_{i \in C'}{\pi_iy_{i}} \leq |C_1| + \sum \limits_{i \in C'}{\pi_i} - 1, \label{lci4}
\end{equation}
\noindent where $\pi_i \in \mathbb{Z}$, $\pi_i \geq 1$ for all $i\in C'$, are the currently available lifted coefficients.
Without loss of generality, to down-lift a variable $y_j$, $j \in C_2\backslash C'$, the following auxiliary knapsack problem is solved
\begin{equation*}
\mbox{($\mathcal{A}_1$)\quad}\phi_1 = \max \Bigg\{ \sum \limits_{i \in C_1}{y_{i}} + \sum \limits_{i \in C'}{\pi_iy_{i}} :\, (\ref{lci1}),\, y_j = 0,\, y_i = 1 \;\forall\, i \in C_2 \backslash (C' \cup \{j\}),\, y \in \{0,1\}^{|P|} \Bigg\},
\end{equation*}

\noindent which consists of determining the maximum value $\phi_1$ that the left-hand side of (\ref{lci4}) can assume while satisfying the original knapsack constraint (\ref{lci1}) and fixing $y_{j} = 0$ and $y_i = 1$ for all $i \in C_2\backslash (C' \cup \{j\})$.
The resulting lifted coefficient of $y_j$ is given by the gap between $\phi_1$ and the current right-hand side value of (\ref{lci4}), i.e., $\pi_j = \phi_1 - (|C_1| + \sum \limits_{i \in C'}{\pi_i} - 1)$. Naturally, we update $C'\leftarrow C' \cup \{j\}$ after $y_j$ is lifted. 

Intuitively speaking, the process described above can be seen as removing $j$ from the cover (by setting $y_j = 0$) and, then, computing the maximum value $\pi_j$ can assume as to bring $j$ back to the cover. Recall that simply setting $\pi = \textbf{1}$ and considering $C' = C_2$ ensures the validity of (\ref{lci4}). Accordingly, keeping the remaining un-lifted variables $y_i$, for all $i \in C_2\backslash (C' \cup\{j\}$), fixed at one while down-lifting a variable $y_j$ yields the validity of the LCI obtained after all down-liftings are done (precisely, when $C' = C_2$). In particular, this resulting LCI takes the form
\begin{equation}
\sum \limits_{i \in C_1}{y_{i}} + \sum \limits_{i \in C_2}{\pi_iy_{i}} \leq |C_1| + \sum \limits_{i \in C_2}{\pi_i} - 1, \label{lci5}
\end{equation}

\noindent and is valid for the region $\Phi$ determined by the original knapsack constraint (\ref{lci1}), as defined by (\ref{lci1_2}). 
%

Now, let the set $D \subseteq P \backslash C$ identify the indexes of the $y$ variables to be up-lifted.
Moreover, let $C''$ be the index set of the variables in $P \backslash C$ whose coefficients are already established, such that, initially, $C'' = P \backslash (C \cup D)$ and, by the end of the up-liftings, $C'' = P \backslash C$.
Notice that, if $D = P \backslash C$, then $C''$ is empty before the up-liftings take place.
During the whole up-lifting process, consider the following inequality
\begin{equation}
\sum \limits_{i \in C_1}{y_{i}} + \sum\limits_{i \in C_2}{\pi_iy_i} + \sum\limits_{i \in C''}{\mu_iy_i} \leq |C_1| + \sum\limits_{i \in C_2}{\pi_i} -1, \label{lci6}
\end{equation}

\noindent where $\mu_i \in \mathbb{Z}$, $\mu_i \geq 0$ for all $i\in C''$, are the currently available coefficients. One may note that simply setting $\mu = \textbf{0}$, i.e., removing all the variables $y_i$, $i \in P \backslash C$, from (\ref{lci6}), turns this inequality into (\ref{lci5}). Moreover, considering $\mu = \textbf{0}$ and $\pi = \textbf{1}$ reduces (\ref{lci6}) to the original cover inequality (\ref{lci2_2}). 

Without loss of generality, to up-lift a variable $y_k$, $k \in D\backslash C''$, the following knapsack problem is solved
\begin{equation*}
\mbox{($\mathcal{A}_2$)\quad}\phi_2 = \max \Bigg\{ \sum \limits_{i \in C_1}{y_{i}} + \sum\limits_{i \in C_2}{\pi_iy_i} + \sum\limits_{i \in C''}{\mu_iy_i} :\, (\ref{lci1}),\, y_k = 1,\, y \in \{0,1\}^{|P|} \Bigg\}.
\end{equation*}
\noindent In practice, problem $\mathcal{A}_2$ can be seen as forcefully adding $k$ to the cover (by setting $y_k = 1$) and, then, computing the maximum value $\phi_2$ 
that the left-hand side of the current inequality (\ref{lci6}) can assume while satisfying the original knapsack constraint (\ref{lci1}). Notice that, as $k$ does not belong to the cover originally, satisfying (\ref{lci1}) and $y_k = 1$ can only lead to a value $\phi_2$ inferior or equal to the current right-hand side of (\ref{lci6}). Thus,
the resulting lifted coefficient of $y_k$ is given by the gap between the right-hand side value of (\ref{lci6}) and $\phi_2$. Under the current circumstances, we would have $\mu_k = (|C_1| + \sum \limits_{i \in C_2}{\pi_i} - 1) - \phi_2$. Also notice that, if $p_k > \lfloor\tau\rfloor$, problem $\mathcal{A}_2$ becomes infeasible. In this case, the original coefficient is preserved, i.e., 
$\mu_k$ is set to $\omega_k = 0$. Here, we update $C''\leftarrow C'' \cup \{k\}$ once $y_k$ is lifted.

In this work, up-lifting is also applied to lift inequalities that are only valid for a restricted region of $\Phi$  --- see (\ref{lci1_2}). In particular, consider an inequality of the form 
\begin{equation}
\sum\limits_{i \in C_1}{y_i} + \sum\limits_{i \in C''}{\mu_iy_i}\leq |C_1| - 1 \label{lci_up2}
\end{equation}
\noindent that is valid for the restricted polyhedron
\begin{equation}
\Phi' = \Big\{y \in \{0,1\}^{|P\backslash C_2|}:\, \sum \limits_{i \in P \backslash C_2}{p_{i}y_{i}} \leq \lfloor\tau\rfloor - \sum\limits_{i \in C_2}{p_i} \Big\}.
\end{equation}
\noindent Notice that $\Phi'$ corresponds to $\{y \in \Phi:\, y_i = 1 \; \forall\, i \in C_2\}$.
Then, in this case, to up-lift a variable $y_k$, $k \in D\backslash C''$, the following knapsack problem is solved
\begin{equation*}
\phi_2' = \max \Bigg\{ \sum \limits_{i \in C_1}{y_{i}} + \sum\limits_{i \in C''}{\mu_iy_i} :\, (\ref{lci1}),\, y_k = 1,\, y_i = 1\; \forall i\, \in C_2,\,y \in \{0,1\}^{|P|} \Bigg\},
\end{equation*}
\noindent and the resulting lifted coefficient assumes the value $\mu_k = (|C_1| - 1) - \phi_2'$. Here, setting $y_i = 1$ for all $i \in C_2$ yields the validity of the resulting LCI with respect to the unrestricted region $\Phi$. In fact, in this specific case of up-lifting, the inequality
\begin{equation}
\sum \limits_{i \in C_1}{y_{i}} + \sum\limits_{i \in C_2}{y_i} + \sum\limits_{i \in C''}{\mu_iy_i} \leq |C_1| + |C_2| -1
\end{equation}
\noindent is always valid for $\Phi$ and corresponds to (\ref{lci6}) when $\pi = \textbf{1}$.

For simplicity, we described the process of down-lifting by considering a classical cover inequality of type (\ref{lci2}). However, this kind of lifting is usually applied only after some or all the variables identified in $P \backslash C$ have already been up-lifted. In this sense, consider the following inequality obtained from up-lifting the variables $y_i$, for all $i \in D \subseteq P \backslash C$, of either (\ref{lci2_2}) or (\ref{lci_up2})
\begin{equation}
\sum \limits_{i \in C_1}{y_{i}} + \sum \limits_{i \in C_2}{y_{i}} + \sum\limits_{i \in D}{\mu_iy_i} \leq |C_1| + |C_2| -1. \label{lci7}
\end{equation}
\noindent Recall that $C_1$ and $C_2$ compose a partition of the cover $C$, with $C_1 \neq \emptyset$, and $C' \subseteq C_2$ keeps the indexes of the variables whose down-lifted coefficients were already computed. 
In this case, the auxiliary inequality used during the down-lifting of a variable $y_j$, $j \in C_2\backslash C'$, would take the form
\begin{equation}
\sum \limits_{i \in C_1}{y_{i}} + \sum \limits_{i \in C'}{\pi_iy_{i}} + \sum\limits_{i \in D}{\mu_iy_i} \leq |C_1| + \sum \limits_{i \in C'}{\pi_i} - 1, \label{lci8}
\end{equation}
\noindent while the auxiliary knapsack problem solved would be
\begin{equation*}
\phi'_1 = \max \Bigg\{ \sum \limits_{i \in C_1}{y_{i}} + \sum \limits_{i \in C'}{\pi_iy_{i}} + \sum\limits_{i \in D}{\mu_iy_i}:\, (\ref{lci1}),\, y_j = 0,\, y_i = 1 \;\forall\, i \in C_2 \backslash (C' \cup \{j\}),\, y \in \{0,1\}^{|P|} \Bigg\}.
\end{equation*}
\noindent Furthermore, the lifted coefficient would be set to $\pi_j = \phi'_1 - (|C_1| + \sum \limits_{i \in C'}{\pi_i} - 1)$.

Taking into account the liftings detailed above, we describe in Figure~\ref{lci_separation} the algorithm used to separate LCIs from a fractional solution $(\bar{x},\bar{y},\bar{f},\bar{\varphi})$. At the first step of the algorithm, we look for a cover $C$ for the corresponding knapsack constraint (\ref{lci1}). Starting from $C = \emptyset$, we sequentially add to $C$ elements of the subset
$\{i \in P:\,{\bar y}_i > 0\}$ in non-increasing order of the corresponding values in $\bar{y}$.
This is done until $C$ covers (\ref{lci1}), i.e., $\sum\limits_{i \in C}{p_i} > \lfloor\tau\rfloor$, or until there are no more elements left to be added (see lines 1-8, Figure~\ref{lci_separation}). 
If no cover is found, the algorithm terminates (line 9, Figure~\ref{lci_separation}). Otherwise, the cover found is converted into a minimal one by deleting elements from it. This conversion prioritizes the deletion of elements with the smallest relaxation values (see lines 10-15, Figure~\ref{lci_separation}), as a way to increase the chances of devising a violated LCI. 
\begin{figure}[!ht]
\begin{center}
\scalebox{1}
{
\framebox
{
\begin{minipage}[t]{25cm}
{\small
\begin{tabbing}
xxx\=xxx\=xxx\=xxx\=xxx\=xxx\=xxx\=xxx\=xxx\=xxx\= \kill
\textbf{Input: }A fractional solution $(\bar{x},\bar{y},\bar{f},\bar{\varphi})$ and its corresponding bound $\tau = \sum\limits_{i \in P}{p_i{\bar y}_i}$. \\
\textbf{Output: }{An LCI of type (\ref{lci3}), if any.} \\
\textbf{\scriptsize1.} \textit{Step \rom{1}. Finding an initial cover}\\
\textbf{\scriptsize2.} Set the initial cover $C \leftarrow \emptyset$ and the iterator $j \leftarrow 1$;\\
\textbf{\scriptsize3.} Create a vector $F$ with the indexes in $\{i \in P:\,{\bar y}_i > 0\}$;\\
\textbf{\scriptsize4.} Sort $F$ in non-increasing order of $\bar{y}$;\\
\textbf{\scriptsize5.} \textbf{while} {$j \leq |F|$} \textbf{and} {$\sum\limits_{i \in C}{p_i} \leq \lfloor\tau\rfloor$} \textbf{do}\\
\textbf{\scriptsize6.} \> \> $C \leftarrow C \cup \{F[j]\}$;\\
\textbf{\scriptsize7.} \> \> $j \leftarrow j + 1$;\\
\textbf{\scriptsize8.} {\bf end-while}; \\
\textbf{\scriptsize9.} \textbf{if} {$\sum\limits_{i \in C}{p_i} \leq \lfloor\tau\rfloor$} \textbf{then} halt with no resulting LCI;\\
\textbf{\scriptsize10.} \textit{Step \rom{2}. Converting the cover into a minimal one}\\
\textbf{\scriptsize11.} Create a vector $\tilde{C}$ with the elements in $C$;\\
\textbf{\scriptsize12.} Sort $\tilde{C}$ in non-decreasing order of $\bar{y}$;\\
\textbf{\scriptsize13.} \textbf{for} {$k \leftarrow 1,\dots,|\tilde{C}|$} \textbf{do}\\
\textbf{\scriptsize14.} \> \> \textbf{if} {$\sum\limits_{i \in C\backslash\{\tilde{C}[k]\}}{p_i} > \lfloor\tau\rfloor$} \textbf{then} {$C \leftarrow C \backslash \{\tilde{C}[k]\}$};\\
\textbf{\scriptsize15.} {\bf end-for}; \\
\textbf{\scriptsize16.} \textit{Step \rom{3}. Lifting}\\
\textbf{\scriptsize17.} Define $Q = \{i \in C:\, \bar{y}_i = 1\}$, $\bar{C}_1 = \{i \in P\backslash C:\, \bar{y}_i > 0\}$ and $\bar{C}_2 = \{i \in P\backslash C:\, \bar{y}_i = 0\}$;\\
\textbf{\scriptsize18.} Define the initial inequality $\sum\limits_{i \in C\backslash Q}{y_i} \leq |C\backslash Q| -1$;\\
\textbf{\scriptsize19.} Up-lift $y_i$, for all $i \in {\bar C}_1$, obtaining $\sum\limits_{i \in C\backslash Q}{y_i} +\sum\limits_{i \in {\bar C}_1}{\mu_i y_i} \leq |C\backslash Q| -1$;\\
\textbf{\scriptsize20.} Down-lift $y_i$, for all $i \in Q$, obtaining $\sum\limits_{i \in C\backslash Q}{y_i} + \sum\limits_{i \in Q}{\pi_i y_i} + \sum\limits_{i \in {\bar C}_1}{\mu_i y_i} \leq |C\backslash Q| + \sum\limits_{i \in Q}{\pi_i} -1$;\\
\textbf{\scriptsize21.} Up-lift $y_i$, for all $i \in \bar{C}_2$, obtaining $\sum\limits_{i \in C\backslash Q}{y_i} + \sum\limits_{i \in Q}{\pi_i y_i} + \sum\limits_{i \in {P \backslash C}}{\mu_i y_i} \leq |C\backslash Q| + \sum\limits_{i \in Q}{\pi_i} -1$;\\
\textbf{\scriptsize22.} \textbf{return} the resulting LCI;
\end{tabbing}
 }
\end{minipage}
}
}
\end{center}
\caption{Algorithm used to separate possibly violated LCIs.}
\label{lci_separation}
\end{figure}

In order to guide the lifting process, we define three subsets of indexes, denoted by
$Q = \{i \in C:\, \bar{y}_i = 1\}$, $\bar{C}_1 = \{i \in P\backslash C:\, \bar{y}_i > 0\}$ and $\bar{C}_2 = \{i \in P\backslash C:\, \bar{y}_i = 0\}$. Note that $\bar{C}_1 \cup \bar{C}_2 = P \backslash C$ and $\bar{C}_1 \cap \bar{C}_2 = \emptyset$. The following result guarantees that $C \backslash Q \neq \emptyset$ at this point of the separation procedure, i.e., there is at least one element in $C$ whose corresponding value in $\bar{y}$ is fractional.

\begin{proposition}
\label{prop_minimal_cover}
Given a fractional solution $(\bar{x},\bar{y},\bar{f},\bar{\varphi})$ for $\mathcal{L}_2$ and a bound $\tau = \sum\limits_{i \in P}{p_i{\bar y}_i}$, then any cover $C$ for the corresponding knapsack constraint (\ref{lci1}), with ${\bar y}_i > 0$ for all $i \in C$, necessarily has an element $i \in C$ such that $0 < {\bar y}_i < 1$.
\end{proposition}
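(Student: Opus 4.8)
The plan is to argue by contradiction, reducing everything to the integrality of the profits $p_i$ and the definition of $\tau$. Suppose, for contradiction, that no element of $C$ has a fractional value, i.e.\ $\bar y_i \notin (0,1)$ for every $i \in C$. Combined with the standing hypothesis that $\bar y_i > 0$ for all $i \in C$, this forces $\bar y_i = 1$ for every $i \in C$.

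The first step is to bound $\sum_{i \in C} p_i$ from above using the fact that $C \subseteq P$ and that $p_j \bar y_j \ge 0$ for every $j \in P$ (since $p_j \in \mathbb{Z}^+$ and $\bar y_j \ge 0$ in any point of $\mathcal{L}_2$). Under the assumption $\bar y_i = 1$ for all $i \in C$, this gives
\begin{equation*}
\sum_{i \in C} p_i \;=\; \sum_{i \in C} p_i \bar y_i \;\le\; \sum_{j \in P} p_j \bar y_j \;=\; \tau .
\end{equation*}
The second step is to observe that the left-hand side is a nonnegative integer, because each $p_i \in \mathbb{Z}^+$; hence the inequality $\sum_{i \in C} p_i \le \tau$ sharpens to $\sum_{i \in C} p_i \le \lfloor \tau \rfloor$. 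But $C$ being a cover for the knapsack constraint~(\ref{lci1}) means precisely that $\sum_{i \in C} p_i > \lfloor \tau \rfloor$, a contradiction. Therefore $C$ must contain an index $i$ with $0 < \bar y_i < 1$, which is the claim.

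The argument is elementary and I do not expect any real obstacle; the only point requiring a little care is the passage from $\sum_{i\in C} p_i \le \tau$ to $\sum_{i\in C} p_i \le \lfloor\tau\rfloor$, which is exactly the negation of the cover condition and hinges on the integrality of $\sum_{i\in C}p_i$. Note that no structural property of $\mathcal{L}_2$ beyond $\bar y \ge 0$ is used, together with the specific choice $\tau = \sum_{i\in P} p_i \bar y_i$.
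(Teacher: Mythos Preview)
Your proof is correct and follows essentially the same approach as the paper: both argue by contradiction, assume $\bar y_i = 1$ for all $i\in C$, and derive $\sum_{i\in C} p_i \le \lfloor\tau\rfloor < \sum_{i\in C} p_i$ using the integrality of the $p_i$ and the definition of a cover. The paper compresses this into a single chain of inequalities, but the content is identical.
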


\begin{proof}
Consider a cover $C$ for the knapsack constraint (\ref{lci1}), with ${\bar y}_i > 0$ for all $i \in C$, and suppose, by contradiction, that ${\bar y}_i = 1$ for all $i \in C$.
In this case, since $p_i \in \mathbb{Z}^+$ for all $i \in P$ (by definition), and $C$ covers (\ref{lci1}), we would have
\begin{equation}
\lfloor\tau\rfloor = \lfloor\sum\limits_{i \in P}{p_i{\bar y}_i}\rfloor \geq \lfloor \sum\limits_{i \in C}{p_i{\bar y}_i}\rfloor = \sum\limits_{i \in C}{p_i} > \lfloor\tau\rfloor,
\end{equation}

\noindent which is a contradiction. \qedhere
\end{proof}

Then, starting from the inequality
\begin{equation}
\sum\limits_{i \in C\backslash Q}{y_i} \leq |C\backslash Q| -1, \label{lci11}
\end{equation}
\noindent we up-lift the variables $y_i$, for all $i \in \bar{C}_1$, which yields the inequality 
\begin{equation}
\sum\limits_{i \in C\backslash Q}{y_i} +\sum\limits_{i \in {\bar C}_1}{\mu_i y_i} \leq  |C\backslash Q| -1. \label{lci12}
\end{equation}
\noindent Since (\ref{lci11}) only considers a restricted region of the original polyhedron $\Phi$, neither (\ref{lci11}) nor
(\ref{lci12}) are valid LCIs. Nevertheless, the inequality
\begin{equation}
\sum\limits_{i \in C}{y_i} + \sum\limits_{i \in {\bar C}_1}{\mu_i y_i} \leq |C| -1
\end{equation}

\noindent is valid at this point and can be strengthened by down-lifting the variables $y_i$, for all $i \in Q$. Once these down-liftings are performed, the LCI takes the form
\begin{equation}
\sum\limits_{i \in C\backslash Q}{y_i} + \sum\limits_{i \in Q}{\pi_i y_i} + \sum\limits_{i \in {\bar C}_1}{\mu_i y_i} \leq |C\backslash Q| + \sum\limits_{i \in Q}{\pi_i} -1.
\end{equation}

At last, we up-lift the remaining variables $y_i$, for all $i \in \bar{C}_2$, obtaining
\begin{equation}
\sum\limits_{i \in C\backslash Q}{y_i} + \sum\limits_{i \in Q}{\pi_i y_i} + \sum\limits_{i \in {P \backslash C}}{\mu_i y_i} \leq |C\backslash Q| + \sum\limits_{i \in Q}{\pi_i} -1,
\end{equation}

\noindent with $\mu \geq \textbf{0}$ and $\pi \geq \textbf{1}$.
The sequence of liftings detailed above is summarized at the last step of the algorithm in Figure~\ref{lci_separation} (see lines 16-21).
Naturally, the LCI returned by the separation algorithm (line 22, Figure~\ref{lci_separation}) is only used in the cutting-plane scheme if it is violated by $(\bar{x},\bar{y},\bar{f},\bar{\varphi})$, i.e., if $\sum\limits_{i \in C\backslash Q}{{\bar y}_i} + \sum\limits_{i \in Q}{\pi_i {\bar y}_i} + \sum\limits_{i \in {P \backslash C}}{\mu_i {\bar y}_i} > |C\backslash Q| + \sum\limits_{i \in Q}{\pi_i} -1$.

\section{Baseline branch-and-cut algorithm}
\label{s_branch-and-cut}
 The exact algorithm of Bianchessi et al.~\cite{Bianchessi2017} for TOP is used as a baseline to evaluate the performance of the cutting-plane scheme here proposed. In the case of STOP, the compact formulation $\mathcal{F}_1$, defined by (\ref{b100})-(\ref{b113}), with the addition of (\ref{ineq00}), is solved by means of an optimization solver (in this case, CPLEX) in a way that GCCs are separated on the fly at each node of the branch-and-bound tree. In order to avoid a tailing-off phenomenon, the separation of cuts at each node of the tree is iterated until the bound improvement is no greater than a pre-established tolerance $\epsilon_1$.
 In this branch-and-cut algorithm, the same procedure described in Section~\ref{s_separate_gccs} is applied to separate GCCs for each fractional solution, and all the violated cuts found are added to the model. 
 
 We remark that, although (\ref{ineq00}) is redundant for $\mathcal{F}_1$ (see, again, Corollary~\ref{corol01}),
 this inequality was preserved in our experiments as a way to properly reproduce the original algorithm of Bianchessi et al. \cite{Bianchessi2017}. In fact, we conjecture that CPLEX does benefit from this inequality when the separation of built-in cuts is enabled. This is quite intuitive, since (\ref{ineq00}) takes the form of a knapsack constraint, and cover cuts are among the several classical valid inequalities that compose the built-in cuts.

\section{Cutting-plane scheme}
\label{s_cutting-plane}
When solving (mixed) integer problems, cutting-plane algorithms work by iteratively reinforcing an initial Linear Programming (LP) model, which usually corresponds to the linearly relaxed version of the original problem. Precisely, at each iteration, the algorithm seeks linear inequalities that are violated by the solution of the current LP model.
These inequalities are referred to as \emph{cuts} and are added to the model on the fly until a stopping condition is met or the current solution is feasible (and, thus, optimal) for the original integer problem.

The algorithm here proposed solves the compact formulation $\mathcal{F}_2$ within a cutting-plane scheme that starts from the LP model $\mathcal{L}_2$, the linear relaxation of $\mathcal{F}_2$. Accordingly, the polyhedron defined by $\mathcal{L}_2$ is gradually restricted by the addition of new linear inequalities, which, in this case, correspond to the valid inequalities detailed in Section~\ref{s_cuts}.
This initial step, called \emph{cutting-plane phase}, is summarized in Figure~\ref{fig_cutting-plane} and detailed in the sequel.

Consider the polyhedron $\Omega$ defined by the feasible region of $\mathcal{L}_2$. Precisely, $\Omega = \big\{(x,y,f,\varphi) \in \mathbb{R}^{|A|}\times\mathbb{R}^{|N|}\times\mathbb{R}^{|A|}\times\mathbb{R}:\, (\ref{b101})$-$(\ref{b105}),\,(\ref{csc107})$-$(\ref{csc110}),\,(\ref{csc114}),\, {\bf 0}\leq x \leq {\bf 1},\, {\bf 0}\leq y \leq {\bf 1} \mbox{ and } f \geq \textbf{0}\big\}$. For simplicity, assume that $\mathcal{F}_2$ is feasible, and, thus, $\Omega \neq \emptyset$.
Let LP$^{\psi}$ and UB$^{\psi}$ keep, respectively, the LP model and the dual (upper) bound on the optimal solution of $\mathcal{F}_2$ available by the end of an iteration $\psi$ of the cutting-plane phase. Likewise, the set $\Gamma^{\psi}$ keeps all the linear inequalities added to the original LP $\mathcal{L}_2$ until the iteration $\psi$. In addition, \emph{opt}(LP$^{\psi}$) denotes the optimal solution value of a model LP$^{\psi}$.
Here, the iteration $\psi = 0$ stands for the initialization of the cutting-plane phase.
Accordingly, the initial model and its corresponding bound are denoted by LP$^{0}$ and UB$^{0}$, respectively, and $\Gamma^{0} = \emptyset$.
\begin{figure}[!ht]
\begin{center}
\scalebox{1}
{
\framebox
{
\begin{minipage}[t]{25cm}
{\small
\begin{tabbing}
xxx\=xxx\=xxx\=xxx\=xxx\=xxx\=xxx\=xxx\=xxx\=xxx\= \kill
\textbf{Input: }{The initial LP model $\mathcal{L}_2$ and a tolerance value $\epsilon_2$.} \\
\textbf{Output: }{A reinforced LP model and a dual bound on the optimal solution of $\mathcal{F}_2$.} \\
\textbf{\scriptsize1.} Initialize the iterator $\psi \leftarrow 0$;\\
\textbf{\scriptsize2.} Define LP$^{0} = \mathcal{L}_2$ and set $\Gamma^{0} \leftarrow \emptyset$;\\
\textbf{\scriptsize3.} UB$^{0} \leftarrow$ \emph{opt}(LP$^{0}$);\\
\textbf{\scriptsize4.} \textbf{do}\\
\textbf{\scriptsize5.} \> \> Update $\psi \leftarrow \psi + 1$;\\
\textbf{\scriptsize6.} \> \> Set $\Gamma^{\psi} \leftarrow \Gamma^{\psi-1}$;\\
\textbf{\scriptsize7.} \> \> Separate and add to $\Gamma^{\psi}$ \emph{some} of the violated GCCs, if any;\\
\textbf{\scriptsize8.} \> \> Separate and add to $\Gamma^{\psi}$ \emph{some} of the violated CCs, if any;\\
\textbf{\scriptsize9.} \> \> Separate and add to $\Gamma^{\psi}$ a violated LCI, if any;\\
\textbf{\scriptsize10.} \> \> Define LP$^{\psi} = \Big\{\max \sum\limits_{j \in P}{p_jy_j}:$ \mbox{inequality $i$ is satisfied for all} $i\in \Gamma^{\psi} ,\,(x,y,f,\varphi) \in \Omega\Big\}$;\\
\textbf{\scriptsize11.} \> \> UB$^{\psi} \leftarrow$ \emph{opt}(LP$^\psi$);\\
\textbf{\scriptsize12.} {\bf while} {($\Gamma^{\psi}\neq\Gamma^{\psi-1}$)} \textbf{and} (UB$^{\psi-1}$- UB$^{\psi} > \epsilon_2$); \\
\textbf{\scriptsize13.} \textbf{return} (LP$^{\psi}$, UB$^{\psi}$);
\end{tabbing}
 }
\end{minipage}
}
}
\end{center}
\caption{Description of the cutting-plane phase of the algorithm proposed.}
\label{fig_cutting-plane}
\end{figure}

After the initialization (lines 1-3, Figure~\ref{fig_cutting-plane}), the iterative procedure takes place. At each loop of the cutting-plane phase, the iterator $\psi$ is updated, and the set $\Gamma^{\psi}$ is initialized with the cuts found so far (see lines 5 and 6, Figure~\ref{fig_cutting-plane}). Then, the algorithm looks for linear inequalities violated by the solution of the current model, which, at this point, corresponds to LP$^{\psi-1}$. These cuts are found by means of the separation procedures described in Section~\ref{s_separation}. Instead of selecting all the violated cuts found, we only add to $\Gamma^{\psi}$ the most violated cut (if any) and the ones that are sufficiently orthogonal to it. As verified in several works (see, e.g., \cite{Wesselmann2012,Samer2015,Bicalho2016}), this strategy is able to balance the strength and diversity of the cuts separated, while limiting the model size. Here, this strategy is applied to select both GCCs and CCs, but separately (lines 7 and 8, Figure~\ref{fig_cutting-plane}). Naturally, this filtering procedure does not apply to LCIs, since at most a single LCI is separated per iteration (line 9, Figure~\ref{fig_cutting-plane}). Details on how these cuts are selected are given in Section~\ref{s_implementation_details}.

After looking for violated inequalities (cuts), we define an updated model LP$^{\psi}$, which corresponds to adding to $\mathcal{L}_2$ all the cuts selected so far (see line 10, Figure~\ref{fig_cutting-plane}). Accordingly, the current bound is set to the optimal solution value of LP$^{\psi}$ (line 11, Figure~\ref{fig_cutting-plane}). The algorithm iterates until either no more violated cuts are found or the bound improvement of the current model is inferior or equal to the tolerance $\epsilon_2$ (see line 12, Figure~\ref{fig_cutting-plane}). We highlight that the order in which the three types of inequalities are separated is not relevant in this case (lines 7-9, Figure~\ref{fig_cutting-plane}), as the updated LP model is not solved until all separation procedures are done.

Once the cutting-plane phase is over, the integrality of the variables $x$ and $y$ is restored, and the resulting reinforced model is solved to optimality by an optimization solver. At this point, inequalities (\ref{csc110}) from $\mathcal{F}_2$ are turned into cuts (instead of actual restrictions), just like the ones selected in the cutting-plane phase. As discussed by the end of Section~\ref{s_models}, the aim is to take advantage from cut management mechanisms within the optimization solver's branch-and-cut scheme.

In practical terms, the algorithm described above is a branch-and-cut in which GCCs, CCs and LCIs are only separated at the root node of the branch-and-bound tree. In fact, the cutting-plane proposed can be easily extended by performing the cutting-plane phase of Figure~\ref{fig_cutting-plane} at each node of the branch-and-bound tree. Pilot experiments suggested that such approach is not worthy in this case, as the additional strengthening of the model does not always pay off the loss in compactness. Nevertheless, investigating other strategies of cut selection and management might lead to more promising results.

\section{Implementation details}
\label{s_implementation_details}
All the codes were developed in C++, along with the optimization solver ILOG CPLEX 12.6.
The baseline branch-and-cut algorithm described in Section~\ref{s_branch-and-cut} was implemented using the callback mechanism of CPLEX. Moreover, CPLEX was used to solve the LP models within the cutting-plane phase of the cutting-plane algorithm proposed and to close the integrality gap of the reinforced MILP model obtained from the addition of cuts. We kept the default configurations of CPLEX in our implementations, since all the previous works in the literature of TOP that used CPLEX make the same choice.

Regarding the separation of cuts, we solved maximum flow sub-problems with the implementation of the preflow push-relabel algorithm of Goldberg and Tarjan~\cite{Goldberg1988} provided by the open-source Library for Efficient Modeling and Optimization in Networks --- LEMON~\cite{Dezso2011}. The knapsack sub-problems that arise during the separation of LCIs were solved through classical dynamic programming based on Bellman recursion \cite{Bellman57}.

In the selection of cuts, we adopted the \emph{absolute violation} criterion to determine which inequalities are violated by a given solution. In turn, the so-called \emph{distance} criterion was used to properly compare two cuts, i.e., to determine which one is most violated by a solution. Given an $n$-dimensional column vector $w$ of binary variables, a point $\bar{w} \in \mathbb{R}^n$ and an inequality of the general form $a^T w \leq b$, with $a \in \mathbb{R}^n$, $b \in \mathbb{R}$, the absolute violation of this inequality with respect to $\bar{w}$ is simply given by $a^T \bar{w} - b$. Moreover, the distance from $a^T w \leq b$ to $\bar{w}$ corresponds to the Euclidean distance between the hyperplane $a^T w = b$ and $\bar{w}$, which is equal to $\frac{(a^T \bar{w} - b)}{\lVert a \rVert}$, where $\lVert a \rVert$ is the Euclidean norm of $a$.

In our implementation of the cutting-plane algorithm, we set two parameters for each type of inequality separated: a precision one used to classify the inequalities into violated or not (namely \emph{absolute violation precision}), and another one to discard cuts that are not sufficiently orthogonal to the most violated ones. The latter parameter determines the minimum angle that an arbitrary cut must form with the most violated cut, as not to be discarded. In practice, this parameter establishes the maximum acceptable inner product between the arbitrary cut and the most violated one. Accordingly, we call it the \emph{maximum inner product}. In the case of two inequalities $a_1^T w \leq b_1$ and $a_2^T w \leq b_2$, with $a_1,a_2 \in \mathbb{R}^n$ and $b_1,b_2 \in \mathbb{R}$, the inner product between them is given by $\frac{(a_1^Ta_2)}{\lVert a_1 \rVert \lVert a_2 \rVert}$ and corresponds to the cosine of the angle defined by them.
The values of these parameters were set according to pilot experiments and are shown in Table~\ref{table_parameters}.
\begin{table}[!ht]
\center
\caption{Parameter configuration adopted in the separation and selection of valid inequalities in the cutting-plane algorithm.}
\label{table_parameters}
\begin{tabular}{llllll}
\toprule
& &  \multicolumn{2}{c}{\textbf{Parameter}} \\
\cmidrule{3-4}
{\bf Inequalities} &  & Absolute violation precision & Maximum inner product\\
\midrule
GCCs & & 0.05 & 0.03\\
CCs & & 0.3 & 0.03 \\
LCIs & & $10^{-5}$  & ~~--\\
\bottomrule
\end{tabular}
\end{table}

The tolerance input value $\epsilon_2$ of the cutting-plane algorithm (see Figure~\ref{fig_cutting-plane}) was set to $10^{-3}$. In the case of the baseline branch-and-cut, the absolute violation precision regarding the separation of GCCs was also set to 0.05. In addition, the tailing-off tolerance $\epsilon_1$ was set to $10^{-3}$, the same value adopted to $\epsilon_2$.

We remark that all the parameter configurations described above were established according to pilot tests on a control set of 10 instances, composed of both challenging instances and some of the smallest ones. This control set is detailed in Table~\ref{table_control_instances}, where we report, for each instance, the number of vertices ($|N|$), the number of vehicles ($|M|$) and the route duration limit ($T$). The reduced number of instances was chosen as a way to avoid overfitting.

\begin{table}[!ht]
\center
\caption{Control set of TOP instances used to tune the algorithm's parameters.}
\label{table_control_instances}
\begin{tabular}{lrrr}
\toprule
\textbf{Instance} & \multicolumn{1}{c}{$|N|$} & \multicolumn{1}{c}{$|M|$} & \multicolumn{1}{c}{$T$} \\
\midrule
p3.3.r & 33 & 3 & 33.3 \\
p4.3.j & 100 & 3 & 46.7 \\
p4.3.n & 100 & 3 & 60.0 \\
p5.3.m & 66 & 3 & 21.7 \\
p5.3.r & 66 & 3 & 30.0 \\
p6.2.k & 64 & 2 & 32.5 \\
p6.3.m & 64 & 3 & 25.0 \\
p6.3.n & 64 & 3 & 26.7 \\
p7.3.o & 102 & 3 & 100.0 \\
p7.3.p & 102 & 3 & 106.7 \\
\bottomrule
\end{tabular}
\end{table}

We also highlight that, despite of their exact-like form, the algorithms adopted to separate GCCs and CCs are heuristics. In particular, notice that the simple fact that they only consider the cuts that are violated by at least a constant factor makes them heuristics in practice. Moreover, these algorithms adopt a stopping condition based on bound improvement of subsequent iterations, which might halt the separation before all the violated cuts are found.
Then, in practical terms, the separation algorithms adopted do not necessarily give the actual theoretical bounds obtained from the addition of the inequalities proposed.

\section{Computational experiments}
\label{s_experiments}
The computational experiments were performed on a 64 bits Intel Core i7-4790K machine with 4.0 GHz and 15.0 GB of RAM, under Linux operating system. The machine has four physical cores, each one running at most two threads in hyper-threading mode. Here, the Baseline Branch-and-Cut and the Cutting-Plane Algorithm are referred to as B-B\&C and CPA, respectively. Both of them were set to run for up to 7200s, the same time limit established in previous works concerning TOP \cite{Poggi10,Boussier07, Dang13b, Keshtkaran16, ElHajj2016, Bianchessi2017}.

In our experiments, we used the benchmark of TOP instances introduced by Chao et al.~\cite{Chao96}, which consists of complete graphs with up to 102 vertices. In this case, no mandatory vertices are considered.
Based on this benchmark, we also generated new instances by randomly setting a percentage of the vertices as mandatory.
Here, this percentage was set to only 5\%, as greater values led to the generation of too many infeasible instances.

The original benchmark of Chao et al.~\cite{Chao96} is composed of 387 instances, which are divided into seven data sets, according to the number of vertices of their graphs. In each set, the instances only differ by the time limit imposed on the route duration and the number of vehicles, which varies from 2 to 4. The characteristics of these data sets are detailed in table~\ref{table_instances}. For each set, it is reported the number of instances (\#), the number of vertices in the graphs ($|N|$) and the range of values that the route duration limit $T$ assumes. 
\begin{table}[!ht]
\center
\caption{Description of the original benchmark of TOP instances.}
\label{table_instances}
\begin{tabular}{cccccccc}
\toprule
{\bf Set} & {1} & 2 & 3 & 4 & 5 & 6 & 7\\
\midrule
\# & 54 & 33 & 60 & 60 & 78 & 42 & 60\\
$|N|$ & 32 & 21 & 33 & 100 & 66 & 64 & 102\\
$T$  & 3.8--22.5 & 1.2--42.5 & 3.8--55 & 3.8--40 & 1.2--65 & 5--200 & 12.5--120 \\
\bottomrule
\end{tabular}
\end{table}

As done in previous works \cite{Dang13b,Bianchessi2017}, we pre-processed all the instances used in our experiments by removing vertices and arcs that are inaccessible with respect to the limit $T$ imposed on the total traverse times of the routes. To this end, we considered the $R$ matrix defined by the end of Section~\ref{s_notation}, which keeps, for each pair of vertices, the time duration of a minimum time path between them.
Moreover, in the specific case of the cutting-plane algorithm, constraints (\ref{b104}) are implicitly satisfied by deleting all the arcs that either enter the origin $s$ or leave the destination $t$. Naturally, the time spent in these pre-processings are included in the execution times of the algorithms tested.

In Section~\ref{s_results_top}, we compare the performance of CPA with B-B\&C and other exact algorithms in the literature of TOP at solving the original benchmark of Chao et al.~\cite{Chao96}. In turn, the results obtained by our implementations of B-B\&C and CPA while solving the new instances (with a non-empty set of mandatory vertices) are discussed in Section~\ref{s_results_stop}.

\subsection{Results for TOP instances}
\label{s_results_top}

Here, we study the behaviour of CPA at solving the TOP benchmark of Chao et al.~\cite{Chao96}.
In this sense, we first analyzed the impact of the inequalities discussed in Section~\ref{s_cuts} on the strength of the formulation $\mathcal{F}_2$. To this end, we computed the dual (upper) bounds obtained from adding these inequalities to $\mathcal{L}_2$ (the linear relaxation of $\mathcal{F}_2$) according to five different configurations, as described in Table~\ref{table_cuts_configurations}. Precisely, for each instance and configuration, we solved the cutting-plane phase described in Figure~\ref{fig_cutting-plane} while considering only the types of inequalities of the corresponding configuration.
\begin{table}[!ht]
\center
\caption{Configurations of valid inequalities.}
\label{table_cuts_configurations}
\begin{tabular}{lccc}
\toprule
& \multicolumn{3}{c}{Inequalities}\\
\cmidrule{2-4}
{\bf Configuration} & GCCs & CCs & LCIs \\
\midrule
1 & $\times$ &&\\
2 & & $\times$ & \\
3  & & & $\times$\\
4  & $\times$ & $\times$ & \\
5  & $\times$ & $\times$ & $\times$\\
\bottomrule
\end{tabular}
\end{table}

The results are detailed in Table~\ref{table_results_lps_top}.
The first column displays the name of each instance set. Then, for each configuration of inequalities, we give the average and the standard deviation (over all the instances in each set) of the percentage bound improvements obtained from the addition of the corresponding inequalities. Without loss of generality, given an instance, its percentage improvement in a configuration $i \in \{1,2,3,4,5\}$ is given by $100 \cdot \frac{UB_{LP} - UB_{i}}{UB_{LP}}$, where $UB_{LP}$ denotes the bound provided by $\mathcal{L}_2$, and $UB_{i}$ stands for the bound obtained from solving the cutting-plane phase in the configuration $i$.
The last row displays the numerical results while considering the complete benchmark of instances. We remark that these results do not take into account the CPLEX built-in cuts, since only GCCs, CCs and LCIs are separated at the cutting-plane phase.
\afterpage{%
    \clearpage
    \thispagestyle{empty}
\begin{landscape}
\begin{table}[!ht]
\sisetup{table-format=2.2} 
\setlength\tabcolsep{3pt}
\center
\caption{Percentage dual (upper) bound improvements obtained from adding to $\mathcal{L}_2$ the inequalities of Section~\ref{s_cuts} according to the five configurations in Table~\ref{table_cuts_configurations}. Results for the original benchmark of TOP instances.}
\label{table_results_lps_top}
    \begin{tabular}
    {l
    S[table-format=2.2]
    S[table-format=2.2]
    S[table-format=2.2]
    S[table-format=2.2]
    S[table-format=2.2]
    S[table-format=2.2]
    S[table-format=2.2]
    S[table-format=2.2]
    S[table-format=2.2]
    S[table-format=2.2]
    S[table-format=2.2]
    S[table-format=2.2]
    S[table-format=2.2]
    S[table-format=2.2]
    S[table-format=2.2]
    }
\toprule
& & \multicolumn{14}{c}{Configuration of inequalities}\\
\cmidrule{3-16}
 & & \multicolumn{2}{c}{1 --- GCCs} & & \multicolumn{2}{c}{2 --- CCs} & & \multicolumn{2}{c}{3 --- LCIs} & & \multicolumn{2}{c}{4 --- GCCs \& CCs} & & \multicolumn{2}{c}{5 --- All} \\ 
 \cmidrule{3-4} \cmidrule{6-7} \cmidrule{9-10} \cmidrule{12-13} \cmidrule{15-16}
\textbf{Set }& & {Avg (\%)} & {StDev (\%)} & & {Avg (\%)} & {StDev (\%)} & & {Avg (\%)} & {StDev (\%)} & & {Avg (\%)} & {StDev (\%)} & & {Avg (\%)} & {StDev (\%)} \\
\midrule
1 & & 3.61 & 3.09 & & 4.46 & 3.87 & & 0.83 & 2.03 & & 4.77 & 3.79 & & 5.12 & 3.89\\
2 & & 0.14 & 0.44 & & 0.42 & 1.38 & & 0.77 & 2.38 & & 0.44 & 1.40 & & 1.08 & 2.88\\
3 & & 1.12 & 1.08 & & 2.08 & 1.64 & & 0.62 & 1.01 & & 2.12 & 1.65 & & 3.01 & 2.16\\
4 & & 4.01 & 3.53 & & 3.51 & 3.70 & & 0.01 & 0.02 & & 4.90 & 4.07 & & 4.92 & 4.06\\
5 & & 0.51 & 1.65 & & 0.86 & 1.81 & & 0.18 & 0.65 & & 0.89 & 1.80 & & 1.08 & 1.91\\
6 & & 0.00 & 0.00 & & 0.00 & 0.00 & & 0.04 & 0.10 & & 0.00 & 0.00 & & 0.04 & 0.10\\
7 & & 3.67 & 2.27 & & 5.40 & 4.25 & & 0.32 & 1.10 & & 6.00 & 3.80 & & 6.16 & 3.72\\
\midrule
\textbf{Total} & & 1.98 & 2.73& & 2.54 & 3.43& & 0.37 & 1.25& & 2.90 & 3.59& & 3.21 & 3.68\\
\bottomrule
\end{tabular}
\end{table}
\end{landscape}
\clearpage
}

The results exposed in Table~\ref{table_results_lps_top} indicate that, on average, CCs are the inequalities that strengthen formulation $\mathcal{F}_2$ the most, followed by GCCs and LCIs. The results also suggest that GCCs, CCs and LCIs do not dominate each other. In fact, coupling the three of them always leads to greater or equal average bound improvements than considering each inequality alone. We also point out that, although LCIs alone only provide marginal average improvements on the bounds, coupling them with the other inequalities is still effective. Such behaviour is somehow expected, as the separation of LCIs relies on the quality of the currently available bounds. Then, these LCIs tend to work better once the bounds are already strengthened by other inequalities.

In a second experiment, we evaluated the performance of CPA by comparing the results obtained by the algorithm with the ones of B-B\&C reported in \cite{Bianchessi2017}. To make a fair comparison, we also report the results of our implementation of B-B\&C running within our experimental environment. The results are shown in Table~\ref{table_results_top2}. The first column displays the name of
each instance set, and, for each algorithm, we give four result values described as follows. The first value  corresponds to the number of instances solved at optimality out of the complete instance set. The second one is the average wall-clock processing time (in seconds) spent in solving these instances. Note that this entry only takes into account the instances solved at optimality. The last couple of result values provides the average and the standard deviation (only over the unsolved instances in each set) of the relative optimality gaps obtained by the algorithm. These gaps are given by $\frac{UB-LB}{UB}$, where $LB$ and $UB$ are, respectively, the best lower and upper bounds obtained by the corresponding algorithm for a given instance. Whenever $LB = UB = 0$, the corresponding optimality gap is set to 0\%.
The last row gives the overall results considering the complete benchmark of instances.

\afterpage{%
    \clearpage
    \thispagestyle{empty}
\begin{landscape}
\begin{table}[!ht]
\sisetup{table-format=3.2} 
\setlength\tabcolsep{4pt}
\center
\caption{Comparison between B-B\&C and CPA at solving the original benchmark of TOP instances.}
\label{table_results_top2}
\begin{tabular}{lccrrcccrrrccrrr}
\toprule
 & & \multicolumn{9}{c}{B-B\&C (Bianchessi et al.~\cite{Bianchessi2017})} & & \multicolumn{4}{c}{}\\ 
 \cmidrule{3-11}
  & & \multicolumn{4}{c}{Original report} & & \multicolumn{4}{c}{Our implementation} & & \multicolumn{4}{c}{CPA (our algorithm)} \\ 
 \cmidrule{3-6} \cmidrule{8-11} \cmidrule{13-16}
 & & \multicolumn{2}{c}{\textit{solved}} & \multicolumn{2}{c}{\textit{unsolved}} & & \multicolumn{2}{c}{\textit{solved}} & \multicolumn{2}{c}{\textit{unsolved}} & & \multicolumn{2}{c}{\textit{solved}} & \multicolumn{2}{c}{\textit{unsolved}} \\ 
 \cmidrule{3-6} \cmidrule{8-11} \cmidrule{13-16}
& & & & \multicolumn{2}{c}{Gap (\%)} & &&& \multicolumn{2}{c}{Gap (\%)} & & & & \multicolumn{2}{c}{Gap (\%)}\\
\cmidrule{5-6} \cmidrule{10-11} \cmidrule{15-16}
\textbf{Set }& & \#opt/total & Time (s) & Avg & StDev & & \#opt/total & Time (s) & Avg & StDev & & \#opt/total & Time (s) & Avg & StDev\\
\midrule
1 & & \textbf{54/54} & 1.10 & -- & --  & & \textbf{54/54} & 0.70 & -- & -- & & \textbf{54/54} & 1.91 & -- & --\\
2 & & \textbf{33/33} & 0.20 & -- & -- & & \textbf{33/33} & 0.07 & -- & -- & & \textbf{33/33} & 0.13 & -- & --\\
3 & & \textbf{60/60} & 184.90 & -- & --  & & \textbf{60/60} & 109.63 & -- & -- & & \textbf{60/60} & 106.33 & -- & --\\
4 & & 39/60 & 870.40 & 2.29 & -- & & 32/60 & 985.92 & 2.59 & 1.69 & & \textbf{43/60} & 1286.50 & 3.03 & 2.57\\
5 & & 60/78 & 517.90 & 3.49 & -- & & 60/78 & 291.58 & 2.95 & 1.45 & & \textbf{62/78} & 395.45 & 3.01 & 1.78\\
6 & & 36/42 & 22.10 & 1.92 & -- & & 39/42 & 183.95 & 1.95 & 0.57 & & \textbf{42/42} & 262.58 & -- & --\\
7 & & 45/60 & 992.80 & 2.53 & -- & & 32/60 & 446.84 & 2.71 & 1.26 & & \textbf{47/60} & 626.11 & 1.94 & 0.75\\
\midrule
\textbf{Total} & & 327/387& 424.49 & 2.67 & -- & & 310/387 & 248.82 & 2.69 & 1.45& & 341/387 & 371.79 & 2.71 & 1.95\\
\bottomrule
\end{tabular}
\end{table}
\end{landscape}
\clearpage
}
From the results, one may note that the average optimality gaps of the solutions obtained by our implementation of B-B\&C are extremely close to those presented in the original report.
On the other hand, our implementation of B-B\&C solved to optimality significantly less instances than the original report. Particularly, it finds difficulty in closing the gaps of the largest instances (sets 4 and 7). We believe that such behaviour is not only due to the differences in hardware, but also to some specific implementation choices, such as the algorithm adopted to solve the maximum flow problems, the CPLEX solver version and, in special, the values of the parameters discussed in Section~\ref{s_implementation_details}. 
Since the overall performance of our implementation is in accordance with the original report and the latter does not provide all of the implementation details --- in particular, the tolerance and precision values adopted in the separation of GCCs ---, we chose not to address this issue in this study.


In any case, the results clearly indicate the superiority of our algorithm (CPA) in solving the original benchmark of TOP instances, even when compared to the original report of B-B\&C. Precisely, our algorithm was able to solve to optimality 31 and 14 more instances than B-B\&C when considering our implementation and the original report in \cite{Bianchessi2017}, respectively. In addition, the average gaps of the solutions (regarding the unsolved instances) provided by CPA are comparable to those of B-B\&C (both in our implementation and in the original report) for all instance sets. We also remark that CPA and B-B\&C present comparable average execution times as well.

As expected, instances whose graphs have greater dimensions are the hardest (sets 4, 5 and 7). We also noticed that instances with greater route duration limits (given by $T$) tend to be more difficult to solve. This is possibly due to the fact that greater limits imply more feasible routes, thus increasing the search space.
On the other hand, the number of vehicles available does not seem to interfere with the difficulty in solving the instances. We believe this is in accordance with the way we model the problem in this work. Precisely, one may note that the size of formulation $\mathcal{F}_2$ (as well as $\mathcal{F}_1$) does not depend on the number of vehicles, as all the routes are implicitly modeled by means of a single commodity.

In Table~\ref{table_results_top}, we summarize, for each instance set, the total number of instances solved to optimality by each exact algorithm in the literature, including our CPA. This table is displayed for completeness purposes, as the differences in hardware and experimental environments are not taken into account. The results for the branch-and-price (B\&P) and the branch-and-cut-and-price (B\&C\&P) algorithms of Keshtkaran et al. \cite{Keshtkaran16} are presented separately. Moreover, the algorithm of Poggi et al. \cite{Poggi10} was omitted due to the lack of complete results in the original report.

\afterpage{%
    \clearpage
    \thispagestyle{empty}
\begin{landscape}
\begin{table}[ht]
\sisetup{table-format=1.0} 
\setlength\tabcolsep{4pt}
\center
\caption{Total number of instances solved by each exact algorithm in the literature of TOP.}
\label{table_results_top}
\begin{tabular}{lcccccccc}
\toprule
 & \multicolumn{1}{c}{} & \multicolumn{1}{c}{} & \multicolumn{2}{c}{Keshtkaran et al.~\cite{Keshtkaran16}} &  \multicolumn{1}{c}{} & \multicolumn{2}{c}{B-B\&C (Bianchessi et al.~\cite{Bianchessi2017})} &\\ 
 \cmidrule{4-5} \cmidrule{7-8}
 & \multicolumn{1}{c}{Boussier et al.~\cite{Boussier07}} & \multicolumn{1}{c}{Dang et al.~\cite{Dang13b}} & \multicolumn{1}{c}{B\&P} & \multicolumn{1}{c}{B\&C\&P} &
  \multicolumn{1}{c}{El-Hajj et al.~\cite{ElHajj2016}} & \multicolumn{1}{c}{Original report} & \multicolumn{1}{c}{Our implementation} & CPA (ours)\\ 
  \midrule
 {\bf Set } & \#opt/total& \#opt/total & \#opt/total & \#opt/total & \#opt/total & \#opt/total & \#opt/total & \#opt/total\\
\midrule
1 & 51/54 & \textbf{54/54} & \textbf{54/54} & \textbf{54/54} & \textbf{54/54} & \textbf{54/54} & \textbf{54/54} & \textbf{54/54}\\
2 & \textbf{33/33} & \textbf{33/33} & \textbf{33/33} & \textbf{33/33} & \textbf{33/33} & \textbf{33/33} & \textbf{33/33} & \textbf{33/33}\\
3 & 50/60 & \textbf{60/60} & \textbf{60/60} & 51/60 & \textbf{60/60} & \textbf{60/60} & \textbf{60/60} & \textbf{60/60}\\
4 & 25/60 & 22/60 & 20/60 & 22/60 & 30/60 & 39/60 & 32/60 & \textbf{43/60}\\
5 & 48/78 & 44/78 & 60/78 & 59/78 & 54/78 & 60/78 & 60/78 & \textbf{62/78}\\
6 & 36/42 & \textbf{42/42} & 36/42 & 38/42 & \textbf{42/42} & 36/42 & 39/42 & \textbf{42/42}\\
7 & 27/60 & 23/60 & 38/60 & 34/60 & 27/60 & 45/60 & 32/60 & \textbf{47/60}\\
\midrule
{\bf Total} & \multicolumn{1}{c}{270/387} & \multicolumn{1}{c}{278/387} & \multicolumn{1}{c}{301/387} & \multicolumn{1}{c}{291/387} & \multicolumn{1}{c}{300/387} & \multicolumn{1}{c}{327/387} & \multicolumn{1}{c}{310/387} & \multicolumn{1}{c}{\textbf{341/387}}\\
\bottomrule
\end{tabular}
\end{table}
\end{landscape}
\clearpage
}

In total, CPA solved 14 more instances than any previous exact algorithm and was able to prove the optimality of eight TOP instances previously unsolved, which are listed in Table~\ref{table_new_opts}.
In this table, we report, for each instance, the value of the optimal bound found (column ``Opt.'') and the wall-clock time (in seconds) spent by CPA to solve the instance. Additionally, we display the bound provided by the linear relaxation of $\mathcal{F}_2$ (column ``LP''), as well as the (possibly) improved bound obtained after separating GCCs, CCs and LCIs at the root node of the branch-and-bound tree (column ``LP+cuts''). Once again, we highlight that the latter values (``LP+cuts'') do not consider the CPLEX built-in cuts.

\begin{table}[ht]
\center
\caption{New optimal solutions found by CPA at solving the original benchmark of TOP instances.}
\label{table_new_opts}
\begin{tabular}{lrrrrr}
\toprule
 & & \multicolumn{4}{c}{CPA (our algorithm)} \\
\cmidrule{3-6}
Instance & & \multicolumn{1}{c}{Opt.} & \multicolumn{1}{c}{Time (s)} & \multicolumn{1}{c}{LP} & \multicolumn{1}{c}{LP+cuts} \\ 
\midrule
{p5.3.x} & & 1555.00 & 3833.30 & 1591.07 & 1591.07\\
{p4.2.p} & & 1242.00 & 5192.14 & 1306.00 & 1288.49\\
{p4.3.m} & & 1063.00 & 4553.93 & 1220.71 & 1131.82\\
{p4.3.o} & & 1172.00 & 5322.10 & 1287.18 & 1230.14\\
{p4.3.p} & & 1222.00 & 2436.57 & 1300.97 & 1265.48\\
{p4.4.l} & & 880.00 & 2273.87 & 972.42 & 919.95\\
{p7.2.q} & & 1044.00 & 2324.04 & 1129.62 & 1089.30\\
{p7.3.q} & & 987.00 & 3455.92 & 1078.77 & 1022.57\\
\bottomrule
\end{tabular}
\end{table}

\subsection{Results for the new STOP instances}
\label{s_results_stop}

Now, we study the performance of B-B\&C and CPA at solving the new benchmark of STOP instances.
The name of each new STOP instance (set) corresponds to the original name of the TOP instance (set) from which it was generated, followed by the percentage of vertices selected as mandatory (in this case, 5\%).

As for the TOP instances, we first analyzed the impact of the inequalities discussed in Section~\ref{s_cuts} on the strength of formulation $\mathcal{F}_2$. To this end, we computed the dual (upper) bounds obtained from adding these inequalities to $\mathcal{L}_2$ according to the five configurations described in Table~\ref{table_cuts_configurations}. Precisely, for each instance and configuration, we solved the cutting-plane phase described in Figure~\ref{fig_cutting-plane} while considering only the types of inequalities of the corresponding configuration.

The results are detailed in Table~\ref{table_results_lps_stop}.
The first column displays the name of each instance set. Then, for each configuration of inequalities, we give the average and the standard deviation (over all the instances in each set) of the percentage bound improvements obtained from the addition of the corresponding inequalities. Without loss of generality, given an instance, its percentage improvement in a configuration $i \in \{1,2,3,4,5\}$ is given by $100 \cdot \frac{UB_{LP} - UB_{i}}{UB_{LP}}$, where $UB_{LP}$ denotes the bound provided by $\mathcal{L}_2$, and $UB_{i}$ stands for the bound obtained from solving the cutting-plane phase in the configuration $i$. The last row displays the numerical results while taking into account the complete benchmark of instances.

\afterpage{%
    \clearpage
    \thispagestyle{empty}
\begin{landscape}
\begin{table}[!ht]
\setlength\tabcolsep{3pt}
\center
\caption{Percentage dual (upper) bound improvements obtained from adding to $\mathcal{L}_2$ the inequalities of Section~\ref{s_cuts} according to the five configurations in Table~\ref{table_cuts_configurations}. Results for the new STOP instances.}
\label{table_results_lps_stop}
    \begin{tabular}
    {l
    S[table-format=2.2]
    S[table-format=2.2]
    S[table-format=2.2]
    S[table-format=2.2]
    S[table-format=2.2]
    S[table-format=2.2]
    S[table-format=2.2]
    S[table-format=2.2]
    S[table-format=2.2]
    S[table-format=2.2]
    S[table-format=2.2]
    S[table-format=2.2]
    S[table-format=2.2]
    S[table-format=2.2]
    S[table-format=2.2]
    }
\toprule
& & \multicolumn{14}{c}{Configuration of inequalities}\\
\cmidrule{3-16}
 & & \multicolumn{2}{c}{1 --- GCCs} & & \multicolumn{2}{c}{2 --- CCs} & & \multicolumn{2}{c}{3 --- LCIs} & & \multicolumn{2}{c}{4 --- GCCs \& CCs} & & \multicolumn{2}{c}{5 --- All} \\ 
 \cmidrule{3-4} \cmidrule{6-7} \cmidrule{9-10} \cmidrule{12-13} \cmidrule{15-16}
{\bf Set }& & {Avg (\%)} & {StDev (\%)} & & {Avg (\%)} & {StDev (\%)} & & {Avg (\%)} & {StDev (\%)} & & {Avg (\%)} & {StDev (\%)} & & {Avg (\%)} & {StDev (\%)} \\
\midrule
1\_5\% & & 7.32 & 4.67 & & 8.83 & 5.98 & & 1.00 & 3.93 & & 9.26 & 6.01 & & 9.46 & 6.00\\
2\_5\% & & 0.32 & 0.90 & & 0.86 & 2.44 & & 0.51 & 1.43 & & 0.85 & 2.41 & & 0.97 & 2.75\\
3\_5\% & & 1.48 & 1.35 & & 2.71 & 1.90 & & 0.59 & 0.99 & & 2.71 & 1.87 & & 3.22 & 2.05\\
4\_5\% & & 5.94 & 5.62 & & 5.15 & 6.46 & & 0.00 & 0.01 & & 7.27 & 6.62 & & 7.27 & 6.63\\
5\_5\% & & 0.18 & 0.65 & & 0.86 & 1.46 & & 0.02 & 0.10 & & 0.84 & 1.32 & & 0.87 & 1.40\\
6\_5\% & & 0.06 & 0.17 & & 0.69 & 2.24 & & 0.03 & 0.07 & & 0.70 & 2.24 & & 0.73 & 2.23\\
7\_5\% & & 5.96 & 4.24 & & 8.95 & 9.58 & & 0.00 & 0.00 & & 10.16 & 8.85 & & 10.08 & 8.59\\
\midrule
\textbf{Total} & & 3.18 & 4.46& & 4.03 & 5.95& & 0.26 & 1.52& & 4.63 & 6.14& & 4.75 & 6.09\\
\bottomrule
\end{tabular}
\end{table}
\end{landscape}
\clearpage
}

The results exposed in Table~\ref{table_results_lps_stop} follow the same pattern we observed during the resolution of the original TOP instances. Precisely, they indicate that, on average, CCs are the inequalities that strengthen formulation $\mathcal{F}_2$ the most, followed by GCCs and LCIs. Moreover, the results suggest that GCCs, CCs and LCIs do not dominate each other and that coupling the three of them leads to a greater average bound improvement than considering each inequality alone. 

One may notice that, for some instance sets, coupling inequalities gives worse average bound improvements than considering them separately (see, e.g., set 2\_5\% under configuration 4). Such behaviour can be explained by the fact that the classes of inequalities considered are separated heuristically, as discussed by the end of Section~\ref{s_implementation_details}.

Then, we compared the performance of our implementations of B-B\&C and CPA. The results are shown in Table~\ref{table_results_stop2}. The first column displays the name of
each instance set, and, for each algorithm, we give four result values described as follows. The first value  corresponds to the number of instances solved at optimality (or to proven infeasibility) out of the complete instance set. The second one is the average wall-clock processing time (in seconds) spent in solving these instances. Note that this entry only takes into account the instances solved at optimality. The last couple of result values provides the average and the standard deviation (only over the unsolved instances in each set) of the relative optimality gaps obtained by the algorithm. Recall that these gaps are given by $\frac{UB-LB}{UB}$, where $LB$ and $UB$ are, respectively, the best lower and upper bounds obtained by the corresponding algorithm for a given instance.
If, for a given instance, no feasible solution is found within the time limit and its infeasibility is also not proven, the corresponding optimality gap is assumed to be 100\%. Likewise, this gap is set to 0\% whenever the instance is proven to be infeasible. 
The last row gives the overall results considering the complete benchmark of instances.

The results indicate that CPA outperforms B-B\&C in terms of the quality of the solutions obtained when solving the new benchmark of STOP instances. Aside from having the total average gap of the solutions (concerning unsolved instances) smaller than that of B-B\&C, our algorithm was able to solve to optimality 30 more instances than B-B\&C. Although B-B\&C presents smaller average execution times, these values are still close enough to the ones obtained by CPA, as they have a same order of magnitude for most of the instance sets.

Notice that, for the instance set with the greatest dimensions (set 7\_5\%), the standard deviation of the optimality gaps obtained by both algorithms were particularly high. This is partially due to the fact that, for a few instances in this set, the algorithms could neither find feasible solutions nor prove their infeasibility within the time limit, thus implying optimality gaps of 100\% in these cases. In fact, by analyzing the results in a per-instance basis, we observed that both algorithms had difficulty in proving the infeasibility of the new STOP instances when that was the case. On the other hand, from the experiments, we could not conclude whether fixing vertices as mandatory (in the new STOP instances) complicates or favors the solvability of the feasible instances.

In general, the results for the new STOP instances indicate a similar behaviour as the one observed when solving the original TOP instances. Precisely, instances with greater route duration limits tend to be more difficult to be solved by both algorithms, and the number of vehicles available does not seem to interfere with the difficulty in solving the instances.

\begin{table}[!th]
\sisetup{table-format=4.2} 
\setlength\tabcolsep{4pt}
\center
\caption{Comparison between B-B\&C and CPA at solving the new benchmark of STOP instances.}
\label{table_results_stop2}
\begin{tabular}{lccrrrccrrr}
\toprule
  & & \multicolumn{4}{c}{B-B\&C (Our implementation)} & & \multicolumn{4}{c}{CPA (our algorithm)} \\ 
 \cmidrule{3-6} \cmidrule{8-11}
 & & \multicolumn{2}{c}{\textit{solved}} & \multicolumn{2}{c}{\textit{unsolved}} & & \multicolumn{2}{c}{\textit{solved}} & \multicolumn{2}{c}{\textit{unsolved}} \\ 
  \cmidrule{3-6} \cmidrule{8-11}
& & & & \multicolumn{2}{c}{Gap (\%)} & & & & \multicolumn{2}{c}{Gap (\%)}\\
\cmidrule{5-6} \cmidrule{10-11}
\textbf{Set }& & \#opt/total & Time (s) & Avg & StDev & & \#opt/total & Time (s) & Avg & StDev\\
\midrule
1\_5\% & & \textbf{54/54} & 1.03 & -- & -- & & \textbf{54/54} & 1.62 & -- & --\\
2\_5\% & & \textbf{33/33} & 0.03 & -- & -- & & \textbf{33/33} & 0.03 & -- & --\\
3\_5\% & & \textbf{60/60} & 133.68 & -- & -- & & \textbf{60/60} & 130.31 & -- & --\\
4\_5\% & & 30/60 & 639.86 & 3.15 & 2.10 & & \textbf{41/60} & 1085.46 & 3.58 & 2.85\\
5\_5\% & & 62/78 & 170.10 & 2.45 & 0.94 & & \textbf{65/78} & 353.58 & 2.70 & 1.10\\
6\_5\% & & 39/42 & 174.27 & 2.20 & 0.08 & & \textbf{42/42} & 219.66 & -- & --\\
7\_5\% & & 38/60 & 145.89 & 18.61 & 33.28 & & \textbf{51/60} & 790.22 & 13.04 & 32.63\\
\midrule
\textbf{Total} & & 316/387 & 158.73 & 7.75 & 19.70& & \textbf{346/387} & 361.04 & 5.38 & 15.30\\
\bottomrule
\end{tabular}
\end{table}

\section{Concluding remarks}
In this work, we introduced the Steiner Team Orienteering Problem (STOP) and proposed a Cutting-Plane Algorithm (CPA) for it.
The algorithm works by solving a commodity-based compact formulation reinforced by the separation of three families of inequalities, which consist of
some General Connectivity Constraints (GCCs), classical Lifted Cover Inequalities (LCIs) based on dual bounds and a class of Conflict Cuts (CCs). 
To our knowledge, CCs were also introduced in this work. 

Extensive computational experiments showed that CPA is highly competitive in solving a benchmark of TOP instances.
In particular, the algorithm solved, in total, 14 more instances than any other exact algorithm in the literature of TOP. Moreover, our approach was able to find optimal solutions for eight previously unsolved instances. Regarding the new STOP instances introduced in this work, our algorithm solved 30 more instances than a state-of-the-art branch-and-cut algorithm for TOP adapted to STOP.
From the results, we concluded that CPA benefits from both the strength and compactness of the model used as backbone, as well as from the reinforcement provided by the three families of inequalities aforementioned. 

We believe that the mathematical formulation adopted in this work can be successfully applied to model other routing problems, especially the ones in which the routes are subject to distance and/or time constraints. Moreover, the CCs introduced in our work can also help solving other routing problems via cutting-plane schemes. 

In terms of future work directions for STOP, we believe that one can benefit from the compactness of the model proposed here to develop mathematical programming based heuristics (matheuristics) (see, e.g., \cite{Archetti2014}), field almost unexplored in the literature related to STOP. 
Accordingly, future works can also apply matheuristics to improve the convergence of CPA by building feasible integer solutions from fractional ones. In fact, this strategy has been successfully applied to solve a problem  related to STOP, the so-called orienteering arc routing problem \cite{Archetti2016}.

\section*{Acknowledgments}
This work was partially supported by the Brazilian National Council
for Scientific and Technological Development (CNPq), the Foundation
for Support of Research of the State of Minas Gerais, Brazil
(FAPEMIG), and Coordination for the Improvement of Higher Education
Personnel, Brazil (CAPES).

\section*{References}
 \bibliographystyle{elsarticle-num}
 \bibliography{main}

\clearpage
\appendix
\section{Bound comparison between $\mathcal{F}_1$ and $\mathcal{F}_2$}
\label{appendix_0}

Here, we summarize the results obtained from our experimental analysis of the impact of the valid inequalities (\ref{b109}) and (\ref{csc110}) on the strength of formulations $\mathcal{F}_1$ and $\mathcal{F}_2$, respectively. The results for the original benchmark of TOP instances of Chao et al.~\cite{Chao96} and the new STOP instances introduced in this work are detailed in Tables~\ref{table_results_lps2_top} and~\ref{table_results_lps2_stop}, respectively.
In both tables, the first column displays the name of each instance set. Then, we give the average and the standard deviation (over all the instances in each set) of the percentage bound improvements referred to (\ref{b109}) on $\mathcal{F}_1$ and to (\ref{csc110}) on $\mathcal{F}_2$. In both tables, the last row displays the numerical results while considering the complete benchmark of instances.

Without loss of generality, given an instance, the percentage bound improvement referred to (\ref{b109}) on $\mathcal{F}_1$ is given by $100 \cdot \frac{UB_{\mathcal{L}_1\backslash(\ref{b109})}-UB_{\mathcal{L}_1}}{UB_{\mathcal{L}_1\backslash(\ref{b109})}}$, where $UB_{\mathcal{L}_1}$ denotes the bound provided by $\mathcal{L}_1$ (the linear relaxation of $\mathcal{F}_1$), and $UB_{\mathcal{L}_1\backslash(\ref{b109})}$ stands for the bound obtained from solving $\mathcal{L}_1$ without inequalities (\ref{b109}). Likewise, given an instance, the percentage bound improvement referred to (\ref{csc110}) on $\mathcal{F}_2$ is given by $100 \cdot \frac{UB_{\mathcal{L}_2\backslash(\ref{csc110})}-UB_{\mathcal{L}_2}}{UB_{\mathcal{L}_2\backslash(\ref{csc110})}}$, where $UB_{\mathcal{L}_2}$ denotes the bound provided by $\mathcal{L}_2$ (the linear relaxation of $\mathcal{F}_2$), and $UB_{\mathcal{L}_2\backslash(\ref{csc110})}$ stands for the bound obtained from solving $\mathcal{L}_2$ without inequalities (\ref{csc110}). Recall that, from Theorem~\ref{teo01}, $UB_{\mathcal{L}_1}$ is always equal to $UB_{\mathcal{L}_2}$.

\begin{table}[!ht]
\center
\caption{Impact --- in terms of percentage bound improvement --- of the valid inequalities (\ref{b109}) and (\ref{csc110}) on the linear relaxations of $\mathcal{F}_1$ and $\mathcal{F}_2$, respectively. Results for the original benchmark of TOP instances.}
\label{table_results_lps2_top}
\begin{tabular}
{
l
S[table-format=2.2]
S[table-format=2.2]
S[table-format=2.2]
S[table-format=2.2]
S[table-format=2.2]
S[table-format=2.2]
}
\toprule
 & & \multicolumn{2}{c}{$\mathcal{L}_1$ --- with (\ref{b109})} & & \multicolumn{2}{c}{$\mathcal{L}_2$ --- with (\ref{csc110})} \\ 
 \cmidrule{3-4} \cmidrule{6-7}
{\bf Set }& & {Avg (\%)} & {StDev (\%)} & & {Avg (\%)} & {StDev (\%)}\\
\midrule
1 & & 4.52 & 4.20 & & 2.70 & 2.79\\
2 & & 0.33 & 0.97 & & 0.30 & 0.84\\
3 & & 3.40 & 3.06 & & 2.12 & 1.85\\
4 & & 2.37 & 2.69 & & 2.29 & 2.53\\
5 & & 3.08 & 3.34 & & 1.88 & 2.11\\
6 & & 0.17 & 0.45 & & 0.16 & 0.45\\
7 & & 4.73 & 4.65 & & 3.50 & 3.61\\
\midrule
\textbf{Total} & & 2.93 & 3.60& & 2.03 & 2.56\\
\bottomrule
\end{tabular}
\end{table}

\begin{table}[!ht]
\center
\caption{Impact --- in terms of percentage bound improvement --- of the valid inequalities (\ref{b109}) and (\ref{csc110}) on the linear relaxations of $\mathcal{F}_1$ and $\mathcal{F}_2$, respectively. Results for the new STOP instances.}
\label{table_results_lps2_stop}
\begin{tabular}
{
l
S[table-format=2.2]
S[table-format=2.2]
S[table-format=2.2]
S[table-format=2.2]
S[table-format=2.2]
S[table-format=2.2]
}
\toprule
 & & \multicolumn{2}{c}{$\mathcal{L}_1$ --- with (\ref{b109})} & & \multicolumn{2}{c}{$\mathcal{L}_2$ --- with (\ref{csc110})} \\ 
 \cmidrule{3-4} \cmidrule{6-7}
{\bf Set }& & {Avg (\%)} & {StDev (\%)} & & {Avg (\%)} & {StDev (\%)}\\
\midrule
1\_5\% & & 3.51 & 4.96 & & 2.06 & 3.12\\
2\_5\% & & 0.17 & 0.99 & & 0.15 & 0.85\\
3\_5\% & & 2.30 & 3.05 & & 1.39 & 1.86\\
4\_5\% & & 1.78 & 3.13 & & 1.83 & 3.20\\
5\_5\% & & 2.70 & 6.20 & & 1.89 & 5.11\\
6\_5\% & & 0.23 & 0.67 & & 0.22 & 0.66\\
7\_5\% & & 2.61 & 4.97 & & 2.01 & 3.95\\
\midrule
\textbf{Total} & & 2.11 & 4.36& & 1.52 & 3.40\\
\bottomrule
\end{tabular}
\end{table}

\end{document}